\documentclass[twoside,11pt]{article}

\usepackage{article_style}

\usepackage{amsmath,amssymb}
\usepackage{bbm}
\usepackage{bm}
\usepackage{booktabs}
\usepackage{mathtools}
\usepackage{hyperref}
\hypersetup{
  pdftitle={Beyond Pretrends: A Discordance-Based Sensitivity Analysis for Difference-in-Differences},
  pdfauthor={},
  hidelinks,
  pdfcreator={LaTeX via pandoc}
}

\usepackage{cleveref}

\newcommand{\blind}{0}

\ifnum\blind=1
  \heading{00}{2026}{33}{}{}{}
  \ShortHeadings{Beyond Pretrends}{}
\else
  \heading{00}{2026}{33}{}{}{Thomas Leavitt}
  \ShortHeadings{Beyond Pretrends}{Leavitt}
\fi
\firstpageno{1}

\begin{document}

\title{Beyond Pretrends: A Discordance-Based Sensitivity Analysis for Difference-in-Differences\thanks{An earlier version of this paper circulated under the title ``Beyond Parallel Trends: Improvements on Estimation and Inference in the Difference-in-Differences Design.''}}

\ifnum\blind=1
\author{}
\else
\author{\name Thomas Leavitt \email thomas.leavitt@baruch.cuny.edu \\
       \addr Marxe School of Public and International Affairs \\
       Baruch College, City University of New York (CUNY) \\
       New York, NY 10010, USA}
\fi

\maketitle

\begin{abstract}%
In the canonical Difference-in-Differences design, the control group's post-treatment change serves as an imputation of the treated group's counterfactual change in the same period, an imputation justified by parallel trends. However, differences in group composition can produce between-group differences in how outcomes would evolve over time, rendering this imputation vulnerable to confounding. An alternative imputation --- such as one based on the treated group's pre-treatment change --- avoids such between-group confounding but introduces the risk of confounding from within-group temporal shifts. Ideally, both imputations, each vulnerable to different sources of confounding, would have concordant values, thereby yielding the same causal conclusions. When the imputations are discordant, conclusions under parallel trends hinge more critically on that assumption since alternative imputations would point to different results. Yet in these scenarios, existing pretrends-based sensitivity analyses can show low sensitivity because they ignore post-treatment deviations from pretrends in the control group. This paper therefore proposes a discordance-based sensitivity model in which parallel pretrends are necessary but not sufficient for low sensitivity. I formally justify this model in terms of the expected distance between the \textrm{ATT} under parallel trends and under alternative assumptions, weighted by the joint plausibility of those assumptions. I then provide a decision-theoretic rationale for benchmarking violations of parallel trends using the worst-case discordance between the parallel trends imputation and alternative imputations. Finally, I apply both pretrends- and discordance-based sensitivity models to assess how a labor supply shock influenced electoral support for apartheid-era policies in South Africa, showing how the two approaches yield different results.
\end{abstract}

\begin{keywords}
Difference-in-Differences, Sensitivity Analysis, Parallel Trends, Partial Identification 
\end{keywords}

\section{Introduction}

Difference-in-Differences (DID) is a canonical quasi-experiment, providing a powerful tool for inferring causal effects in observational studies. In a typical DID setup, one of two groups observed over time receives a treatment. To infer the average treatment effect on the treated group (the \textrm{ATT}), one compares the treated group's after-treatment minus before-treatment outcomes with the corresponding after-minus-before outcomes for a comparison group. Because the treated group's counterfactual change in outcomes under no treatment is unobservable, the comparison group's observed change serves as an imputation of this counterfactual change. The parallel trends assumption states that this imputation equals the treated group's counterfactual change in outcomes under no treatment. In most applications, researchers treat the plausibility of this assumption, which they assess via pre-treatment trends, as the key criterion for the credibility of the DID design.

Although parallel trends is virtually synonymous with the DID framework itself, much of the foundational literature points to a different basis for credibility. This literature on quasi-experiments \citep{shadishetal2002, cookcampbell1979} and, more broadly, observational studies \citep{cochran1965, cochran1972} emphasizes that credibility often rests not on a single piece of evidence tied to a particular assumption. Rather, credibility depends on the alignment between multiple strands of evidence, each justified by different assumptions. As \citet[][p.~53]{rosenbaum2015} explains in ``Cochran's Causal Crossword,'' the strength of quasi-experiments arises from their capacity to
\begin{quote}
\textit{cultivate varied strands of evidence within a single study, each strand being weak on its own, each strand vulnerable in a different way, but with the several strands gaining in strength if they agree in appropriate ways.}
\end{quote}
In this view, quasi-experiments like DID function as a crossword puzzle \citep{haack1995,haack1993}: An answer is credible when it aligns with multiple clues --- both across and down. An answer that fits one clue but contradicts another is less convincing than an answer that satisfies both.

This metaphor of the crossword puzzle is compelling in a DID setting, where the aim is to impute the treated group's after-minus-before average in the post-period. Consider two pieces of evidence, each vulnerable to a different source of confounding:

\begin{enumerate}

\item The first is the control group's after-minus-before average in the post-period. This serves as an imputation for the treated group's counterfactual change in outcomes in the same post-period. This imputation has the advantage of sharing the \textit{same time period} as the counterfactual. However, because this imputation comes from a group with a different composition, the imputation may misrepresent the counterfactual if the groups would evolve differently over time or are exposed to different post-period shocks --- i.e., if parallel trends fails.

\item The second is the treated group's after-minus-before average in the pre-period. This imputation avoids concerns about differences between groups because it uses data from the \textit{same group} as the counterfactual. But since this imputation reflects a different time period, the imputation relies on the assumption that the treated group would have evolved similarly across time. If this assumption fails, then the imputation may misrepresent the counterfactual.

\end{enumerate}

The first piece of evidence is vulnerable to \textit{between-group differences}, and the second to \textit{temporal shifts}. Ideally, both pieces of evidence would have similar values and thus support the same conclusion about the \textrm{ATT}. In that case, even if parallel trends were false and an alternative trend assumption were true, the resulting error from relying on parallel trends would be small. More generally, if an alternative trend assumption is at least somewhat plausible, concordance between the two pieces of evidence means one would expect inference based on parallel trends to have less error than it would if the two pieces of evidence were discordant. In short, concordance implies that causal conclusions are less sensitive to violations of parallel trends.

Despite this straightforward intuition, it is not reflected in existing sensitivity analyses to violations of parallel trends. As I go on to show, concordance implies parallel pretrends, but the converse is not true. Thus, pretrends-based sensitivity analyses can show high robustness to the failure of parallel trends, even if an alternative trend assumption, justified by a different strand of evidence, would yield a substantially different conclusion.

In this paper, I therefore propose a discordance-based sensitivity analysis. Since concordance implies parallel pretrends, this paper's sensitivity model will often give similar results to existing pretrends-based models. However, when conclusions diverge under alternative trend assumptions --- such as when pretrends are parallel but the control group's average post-period outcome deviates from its pretrend --- this paper's model will indicate greater sensitivity to violations of parallel trends.

In what follows, I develop a discordance-based sensitivity analysis that builds on existing frameworks. This development begins with a formal setup in \Cref{sec: setup}, review of the pretrends-based model in \Cref{sec: pretrends model}, and the motivation for alternatives in \Cref{sec: motivation}, followed by the introduction of the discordance-based approach in \Cref{sec: discordance-based model}. This approach bounds violations of parallel trends based on the discordance between the parallel trends imputation and alternative possible imputations for the treated group's counterfactual. As I show in \Cref{ssec: concordance}, the model's use of discordance can be formally justified in terms of the expected distance between the \textrm{ATT} under parallel trends and the \textrm{ATT} under alternative identification assumptions, evaluated with respect to the joint plausibility of those assumptions (including that of non-identification). I also show that concordance implies parallel pretrends, though the converse does not hold. \Cref{ssec: weighting} offers a decision-theoretic justification for the model's use of the worst-case discordance between the parallel trends imputation and any alternative imputation. \Cref{sec: multiple post periods} then generalizes the argument --- developed up to that point for a single post-treatment period --- to settings with multiple post-periods. \Cref{sec: application} then applies the discordance-based sensitivity analysis to a case study examining how a labor supply shock in apartheid-era South Africa affected electoral support for apartheid policies. The final section concludes.

\section{Setup} \label{sec: setup}

Suppose two groups, a treated group ($G = 1$) and a comparison group ($G = 0$), where the index set of groups is denoted by $\mathcal{G} \coloneqq \{0, 1\}$. Also suppose $t = 1, \ldots , T$ periods of which $T$ is the only post-treatment period. That is, between periods $T - 1$ and $T$, the treated group is exposed to treatment and the comparison group is not. Denote the index set of time periods by $\mathcal{T} \coloneqq \{1, \ldots , T\}$. For all $(g,t) \in \mathcal{G} \times \mathcal{T}$, let the treatment indicator for group $g$ in period $t$ be $Z_{g,t} \coloneqq \mathbbm{1}\{G = g\} \mathbbm{1}\{t = T\}$, where $\mathbbm{1}\{\cdot\}$ is the indicator function that equals $1$ if its argument is true and $0$ if not. For the treated group, $Z_{g,t} = 0$ for all $t < T$ and $Z_{g,T} = 1$. For the control group, $Z_{g,t} = 0$ for all periods.

To define the causal target in terms of potential outcomes, let $Y_{g,t}(0)$ denote the untreated potential outcome for any group-period $(g,t) \in \mathcal{G} \times \mathcal{T}$. Also let $Y_{1,T}(1)$ denote the treated potential outcome for group $G = 1$ in the post-treatment period, $T$. The causal target is the ATT, denoted by $\tau$,
\begin{align} \label{eq: ATT I}
\tau \coloneqq \bar{Y}_{1,T}(1) - \bar{Y}_{1,T}(0),
\end{align} 
where, for any arbitrary random quantity $W$, I write $\bar{W}$ to denote the expectation (mean) of $W$ with respect to the population-level joint cumulative distribution function (CDF). For example, $\bar{Y}_{1,T}(1)$ is the expected treated potential outcome for group $G = 1$ in period $T$.

This basic setup features two groups and a single post-treatment period, but it generalizes easily to more complex settings. In staggered-adoption settings with variation in treatment timing, \citet{callawaysantanna2021} show that the analysis can be decomposed into a set of simple DID comparisons between each timing-defined group and a never-treated or not-yet-treated group. The same basic setup also extends naturally to settings with multiple post-treatment periods, as detailed in \Cref{sec: multiple post periods}. In such cases, there are multiple post-treatment \textrm{ATT}s, and the target parameter can be any linear combination of them, with a common choice being the average \textrm{ATT} across all post-treatment periods \citep[][p.~2561]{rambachanroth2023}.

Returning to the basic setting with two groups and one post-period, I first state the standard ``consistency'' assumption of \citet{hernanrobins2020}, which enables expressing the \textrm{ATT} in terms of observable quantities:
\begin{assumption} \label{assumption: consistency}
For all $(g,t) \in \mathcal{G} \times \mathcal{T}$,
\begin{equation} \label{eq: consistency}
Y_{g,t} = Z_{g,t} Y_{g,t}(1) + \left(1 - Z_{g,t}\right)Y_{g,t}(0).
\end{equation}
\end{assumption}
\noindent Assumption \ref{assumption: consistency} implies that the observed outcome in each period equals the potential outcome under the treatment status actually received. This assumption rules out treatment anticipation and interference, ensuring that each unit's observable outcome does not reflect any unit's future treatment status or the treatment status of other units at the same time. Moreover, by defining a single treated and a single untreated potential outcome, this setup assumes a single, well-defined version of the treatment.

Under Assumption \ref{assumption: consistency}, the \textrm{ATT} is
\begin{equation}\label{eq: ATT II}
\tau = \bar{Y}_{1,T} - \bar{Y}_{1,T}(0).
\end{equation}
Assumption \ref{assumption: consistency} justifies substituting the treated group's observable outcome in the post-period for the treated group's potential outcome in the post-period. Hence, the expression for the \textrm{ATT} in \eqref{eq: ATT II} contains $\bar{Y}_{1,T}$ instead of $\bar{Y}_{1,T}(1)$.

The \textrm{ATT} in \eqref{eq: ATT II} is now composed of an observable quantity, $\bar{Y}_{1,T}$, and an unobservable one, $\bar{Y}_{1,T}(0)$. To fully express the \textrm{ATT} in only observable quantities, it is helpful to first express the \textrm{ATT} in \eqref{eq: ATT II} as
\small
\begin{equation} \label{eq: ATT III}
\begin{split}
\tau & = \bar{Y}_{1,T} - \bar{Y}_{1,T-1}(0) - \left(\bar{Y}_{0,T}(0) - \bar{Y}_{0,T-1}(0)\right) \\ 
& - \left[\bar{Y}_{1,T}(0) - \bar{Y}_{1,T-1}(0) - \left(\bar{Y}_{0,T}(0) - \bar{Y}_{0,T-1}(0)\right)\right].
\end{split}
\end{equation}
\normalsize
This expression for the \textrm{ATT} in \eqref{eq: ATT III} boils down to the canonical \textrm{DID},
\begin{align} \label{eq: DID I}
\textrm{DID} & \coloneqq \bar{Y}_{1,T}(1) - \bar{Y}_{1,T-1}(0) - \left(\bar{Y}_{0,T}(0) - \bar{Y}_{0,T-1}(0)\right),
\end{align}
minus period $T$'s Difference-in-Trends, where the Difference-in-Trends for any $t \geq 2$ is
\begin{align} \label{eq: diff-in-trends I}
\delta_t & \coloneqq \bar{Y}_{1,t}(0) - \bar{Y}_{1,t-1}(0) - \left(\bar{Y}_{0,t}(0) - \bar{Y}_{0,t-1}(0)\right).
\end{align}
Therefore, the \textrm{ATT} in \eqref{eq: ATT III} is 
\begin{align} \label{eq: ATT IV}
\tau = \textrm{DID} - \delta_T.
\end{align}

The canonical point identification assumption about $\delta_T$ is parallel trends, which, for any $t \geq 2$, is
\begin{align} \label{eq: parallel trends}
\delta_t & = 0.
\end{align}
Under the assumption that $\delta_T = 0$, the ATT in \eqref{eq: ATT IV} equals the DID in \eqref{eq: DID I}, making the unobservable \textrm{ATT} point identified in terms of observable quantities. However, any point assumption about the value of the counterfactual $\delta_T$, not just $\delta_T = 0$, allows one to recover the \textrm{ATT} by subtracting off the assumed value of $\delta_T$ from the \textrm{DID}.

To simplify this framework going forward, define a detrended outcome for group $g$ in time $t \geq 2$ as
\begin{align} \label{eq: detrended outcome I}
\epsilon_{g,t} & \coloneqq Y_{g,t} - \bar{Y}_{g,t-1},
\end{align}
which is simply the observable outcome for group $g$ in period $t$ minus the mean (expectation) of group $g$'s observable outcome in the previous period ($t - 1$). The detrended outcome $\epsilon_{g,t}$ therefore represents the period-to-period change in group $g$'s average outcome. If the underlying outcome series for group $g$ exhibits an upward or downward trend, the sequence of detrended outcomes in the pre-treatment periods, $ \{\epsilon_{g,t}\}_{t=2}^{T-1}$, will generally reflect those changes. This transformation adjusts the level (i.e., the average outcome at each point in time) of the series but does not impose any shape restrictions --- such as a constant slope or flat trend --- on the detrended outcomes.

Using the detrended outcomes, the setup thus far admits an equivalent representation. By combining the definition in \eqref{eq: detrended outcome I} with Assumption \ref{assumption: consistency}, the ATT in \eqref{eq: ATT I} can be rewritten exactly in terms of the detrended potential outcomes. The algebraic steps below make this equivalence clear.

First, recall that $Z_{g,t} \coloneqq \mathbbm{1}\{G = g\} \mathbbm{1}\{t = T\}$, so that $Z_{g,t} = 0$ for all $t < T$. In those periods, Assumption \ref{assumption: consistency} implies that the detrended outcome in \eqref{eq: detrended outcome I} can be expressed as $Y_{g,t} - \bar{Y}_{g,t-1}(0)$. Substituting the expression for $Y_{g,t}$ from \eqref{eq: consistency} into \eqref{eq: detrended outcome I} then yields $\epsilon_{g,t} = \left[Z_{g,t} Y_{g,t}(1) + \left(1 - Z_{g,t}\right)Y_{g,t}(0)\right] - \bar{Y}_{g,t-1}(0)$. This expression is algebraically equivalent to 
\begin{align} \label{eq: detrended outcome II}
\epsilon_{g,t} & = Z_{g,t} \left(Y_{g,t}(1) - \bar{Y}_{g,t-1}(0)\right) + \left(1 - Z_{g,t}\right)\left(Y_{g,t}(0) - \bar{Y}_{g,t-1}(0)\right),
\end{align}
which equals $Y_{g,t}(1) - \bar{Y}_{g,t-1}(0)$ when $Z_{g,t} = 1$ and $Y_{g,t}(0) - \bar{Y}_{g,t-1}(0)$ when $Z_{g,t} = 0$. I refer to these as the detrended potential outcomes, defined as
\begin{align} \label{eq: detrended potential outcome}
\epsilon_{g,t}(z) & \coloneqq Y_{g,t}(z) - \bar{Y}_{g,t-1}(0),
\end{align}
for $z \in \{0, 1\}$, corresponding to the detrended control $(z = 0)$ and treated ($z = 1$) potential outcomes, respectively. Using this definition in \eqref{eq: detrended potential outcome}, the detrended outcome in \eqref{eq: detrended outcome II} can be written compactly as 
\begin{align} \label{eq: detrended outcome III}
\epsilon_{g,t} = Z_{g,t} \epsilon_{g,t}(1) + (1 - Z_{g,t})\epsilon_{g,t}(0),
\end{align}
analogous to the original expression of the ``consistency'' assumption in \eqref{eq: consistency}.

It now follows directly that the \textrm{ATT} in \eqref{eq: ATT I}, defined in terms of the raw potential outcomes, is algebraically equivalent to its detrended form, $\bar{\epsilon}_{1,T}(1) - \bar{\epsilon}_{1,T}(0)$:
\begin{align} 
\tau & \coloneqq \bar{Y}_{1,T}(1) - \bar{Y}_{1,T}(0)  = \bar{Y}_{1,T}(1) - \bar{Y}_{1,T-1} - \left(\bar{Y}_{1,T}(0) - \bar{Y}_{1,T-1}\right) \notag \\ 
& = \bar{\epsilon}_{1,T}(1) - \bar{\epsilon}_{1,T}(0). \label{eq: ATT V}
\end{align} 
Similarly, the \textrm{DID} in \eqref{eq: DID I} and the Difference-in-Trends in \eqref{eq: diff-in-trends I} can both be expressed in terms of detrended potential outcomes as
\begin{align}
\textrm{DID} & \coloneqq \bar{Y}_{1,T}(1) - \bar{Y}_{1,T-1}(0) - \left(\bar{Y}_{0,T}(0) - \bar{Y}_{0,T-1}(0)\right) = \bar{\epsilon}_{1,T}(1) - \bar{\epsilon}_{0,T}(0) \label{eq: DID II}
\end{align}
and
\begin{align}
\delta_t & \coloneqq \bar{Y}_{1,t}(0) - \bar{Y}_{1,t-1}(0) - \left(\bar{Y}_{0,t}(0) - \bar{Y}_{0,t-1}(0)\right) = \bar{\epsilon}_{1,t}(0) - \bar{\epsilon}_{0,t}(0). \label{eq: diff-in-trends II}
\end{align}
As in the earlier formulation based on the raw potential outcomes, it remains the case under the detrended-outcome notation that the \textrm{ATT} is the \textrm{DID} minus period $T$'s Difference-in-Trends:
\begin{align} \label{eq: ATT VI}
\textrm{DID} - \delta_T = \bar{\epsilon}_{1,T}(1) - \bar{\epsilon}_{0,T}(0) - \left[\bar{\epsilon}_{1,T}(0) - \bar{\epsilon}_{0,T}(0)\right] = \bar{\epsilon}_{1,T}(1) - \bar{\epsilon}_{1,T}(0) = \tau.
\end{align}

The parallel trends assumption states that, $\delta_T = 0$, i.e., that the unobservable $\bar{\epsilon}_{1,T}(0)$ equals the observable $\bar{\epsilon}_{0,T}(0)$. To distinguish the assumption itself from the specific counterfactual imputation the assumption justifies (regardless of whether it actually holds), I refer to $\bar{\epsilon}_{0,T}$ as the parallel trends imputation for $\bar{\epsilon}_{1,T}(0)$. I denote the corresponding imputed \textrm{ATT} by $\tau_{0,T}$, where the subscript $(0,T)$ indicates that the counterfactual $\bar{\epsilon}_{1,T}(0)$ is imputed using the comparison group's detrended outcome in period $T$, $\bar{\epsilon}_{0,T}$.

\citet{leavitthatfield2025} extend the detrended-outcome approach to a much broader class of prediction functions, introducing a general point identification requirement they term equal expected prediction errors. They show that DID can be viewed as a special case of a more general two-stage procedure: The first stage uses pre-treatment data to \textit{predict} untreated post-treatment outcomes, and the second stage uses the comparison group's observable post-treatment prediction errors to \textit{correct} the treated group's predictions \citep[see][p.~1827]{leavitthatfield2025}. In standard DID, the parallel trends assumption implies equal expected prediction errors when the prediction function is $\bar{Y}_{g,t-1}$, which is precisely the detrended-outcome formulation in \eqref{eq: detrended outcome I}. This broader framework encompasses many other prediction models, including sequential DID, two-way fixed effects, and models with unit- or group-specific time trends, among others. The key implication is that the reasoning developed here for a simple prediction model extends naturally to a much wider class of models for predicting how outcomes evolve over time.

Finally, to complete the formal setup, let $\mathcal{V} \subseteq \{2, \ldots, T - 1\}$ denote the set of pre-treatment validation periods selected for analysis, with each $v \in \mathcal{V}$ indexing a single period. The set $\mathcal{V}$ can be any nonempty subset of $\{2, \ldots, T - 1\}$,  such as $\{2\},$ $\{T - 1\}$ or the full set $\{2, \ldots, T - 1\}$. If one imputes the counterfactual using the detrended outcome from a particular period $v \in \mathcal{V}$ and group $g \in \mathcal{G}$, $\bar{\epsilon}_{g,v}$, the resulting imputed \textrm{ATT} is denoted $\tau_{g,v}$.

\section{Pretrends-based Sensitivity Analysis} \label{sec: pretrends model}

Researchers cannot directly evaluate whether parallel trends (i.e., $\delta_T = 0$) holds since untreated potential outcomes in period $T$ are unobservable. However, researchers can directly evaluate parallel trends in periods before treatment, i.e., whether $\delta_t = 0$ for $2 \leq t < T$. Hence, researchers typically assess the plausibility of parallel trends by assessing whether pretrends are parallel. Presumably in response to this widespread practice, \citet{rambachanroth2023} derive a general sensitivity analysis framework that imposes restrictions on the extent to which parallel trends is violated based on violations in the pre-period. The framework is general and enables  a wide class of set restrictions on the unobservable value of $\delta_T$.

The leading set restriction on $\delta_T$ that \citet{rambachanroth2023} propose, which builds on the sensitivity analysis in \citet{manskipepper2018}, stipulates that violations of parallel trends are no greater than a magnitude, $M \geq 0$, of the greatest absolute value of the Difference-in-Trends in the pre-period. Formally, one can write this sensitivity analysis as one in which the counterfactual Difference-in-Trends ($\delta_T$) lies in a compact set given by
\begin{align} \label{eq: max set iden}
\left\{\delta_T: \, \abs{\delta_T} \leq M \max \limits_{v \in \mathcal{V}} \abs{\delta_v} \right\}.
\end{align}

The sensitivity parameter $M \geq 0$ controls how tightly one constrains the identification assumption in terms of the greatest absolute violation of parallel trends in the pre-period. Point identification of the \textrm{ATT} holds under $M = 0$, which implies parallel trends. Otherwise, so long as $\max_{v \in \mathcal{V}} \abs{\delta_v} > 0$, the \textrm{ATT} is set-identified rather than point-identified. In particular, under the set restriction in \eqref{eq: max set iden}, the \textrm{ATT} belongs to the following set:
\begin{align}
\left[\textrm{DID} - M \max \limits_{v \in \mathcal{V}} \abs{\delta_v}, \, \textrm{DID} + M \max \limits_{v \in \mathcal{V}} \abs{\delta_v} \right].
\end{align}

To bound the post-period violation of parallel trends, one might want to use the average pre-period violation --- or another function of the pre-period violations --- instead of the maximum, as in \eqref{eq: max set iden}. To accommodate these possibilities, I embed the leading sensitivity analysis of \citet{rambachanroth2023} as a special case of a sensitivity analysis in which
\begin{align} \label{eq: weighted set iden}
\left\{\delta_T: \, \abs{\delta_T} \leq M \sum \limits_{v \in \mathcal{V}} w_v \abs{\delta_v} \right\},
\end{align}
where $\{w_{v}\}_{v \in \mathcal{V}}$ are weights over the set $\mathcal{V}$ defined as
\begin{equation} \label{eq: weights I}
\begin{split}
\left\{w_{v}\right\}_{v \in \mathcal{V}} \in \R^{\abs{\mathcal{V}}} \text{ with } w_{v} \geq 0 \text{ for all } v \in \mathcal{V} \text{ and } \sum \limits_{v \in \mathcal{V}} w_{v} = 1,
\end{split}
\end{equation}
where, in general, $\R^d$ denotes the set of $d$-dimensional vectors with real-valued entries and $\abs{\cdot}$ of a set denotes its cardinality (i.e., the number of elements in that set).

These weights determine how much relative emphasis the analysis places on each validation period's Difference-in-Trends. The set restriction in \eqref{eq: max set iden} is a special case of the weighted model described in \eqref{eq: weighted set iden} and \eqref{eq: weights I}. In this special case, all weight is assigned to the pre-periods that attain the maximum absolute Difference-in-Trends, while every other validation period receives weight $0$.

\section{Motivating an Alternative to Pretrends-based Approaches} \label{sec: motivation}

To motivate an alternative to a pretrends-based sensitivity analysis, \Cref{fig: DID cases} below presents two stylized examples based on \citet{keeleetal2019} and \citet[][p.~164-165]{rosenbaum2017}.
\begin{figure}[H]
\includegraphics[width = \linewidth]{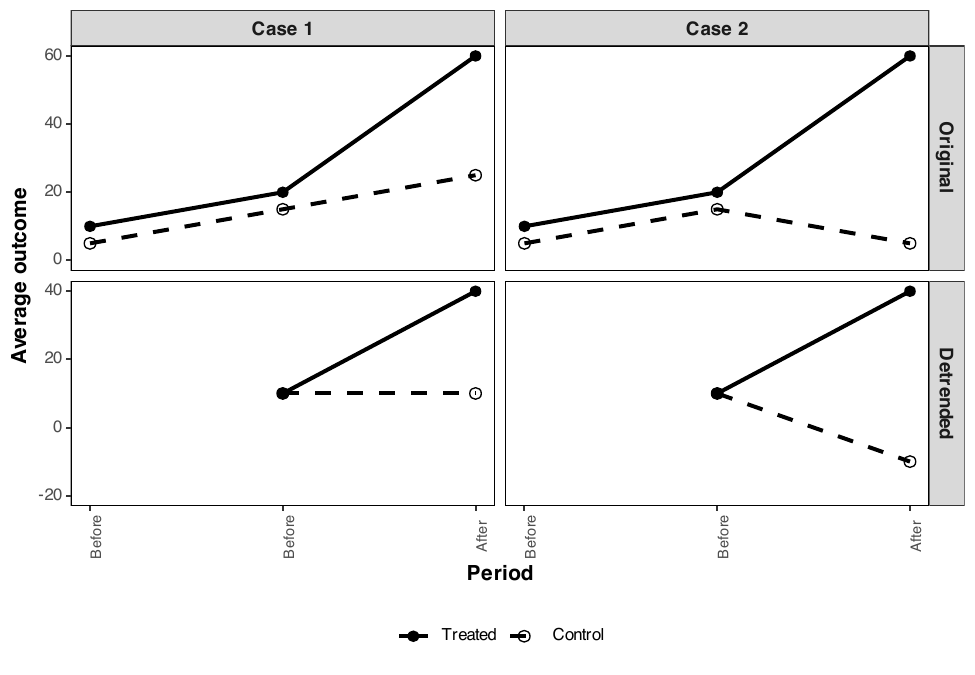}
\caption{Two stylized scenarios (columns), each showing trends for treated and control groups across three time periods: two pre-treatment periods and one post-treatment period. The $x$-axis is common to all panels and reflects the time dimension, while the $y$-axis represents the average of the outcome variable, with the scale of the outcome varying by row. In the top row, outcomes are plotted on their original scale; in the bottom row, outcomes are transformed into changes relative to the group's mean outcome in the immediately preceding period. %
}
\label{fig: DID cases}
\end{figure}
\noindent These two cases both have perfectly parallel pretrends. However, in Case 2, the post-period outcome in the control group exhibits a shock in which the trend breaks from its pre-period pattern. In Case 1, by contrast, no such shock exists, as the control group's pre-trend continues uninterrupted into the post-period.

In Case 2, the control group's post-period deviation from the control group's pre-period trend is consistent with two possible scenarios:
\begin{itemize}
\item \textbf{Scenario 1 (common shock):} Both treated and control groups experience a shared shock in the post-period. In this scenario, the treated group's pre-period trend in average outcomes is a poor imputation for the treated group's counterfactual post-period average outcome under no treatment. Instead, the control group's change in average outcomes better captures how the treated group's average outcome would have evolved given that both groups experienced the same shock.
\item \textbf{Scenario 2 (control-specific shock):} Only the control group experiences a post-period shock. In this scenario, the control group's change in average outcomes is a poor imputation for the treated group's counterfactual average outcome since the control group experienced a shock that the treated group did not. Instead, the treated group's pre-period trend in average outcomes better captures how the treated group's average outcome would have evolved given that the treated group does not experience a shock.
\end{itemize}

In Case 2, the \textrm{ATT} based on the parallel trends imputation ($\tau_{0,T}$) --- appropriate under Scenario 1 (common shock) --- differs from that based on the treated group's detrended pre-period mean ($\tau_{1,T-1}$), which is more appropriate under Scenario 2 (control-specific shock). In other words, in Case 2, the \textrm{ATT} implied by the parallel trends imputation can be substantially off when the parallel trends assumption is false and an alternative imputation is justified. This is not so in Case 1, where all such imputations yield the same \textrm{ATT}. This distinction captures the intuition that causal conclusions about the \textrm{ATT} depend more heavily on the parallel trends assumption in Case 2 than in Case 1.

This intuition is not reflected in a pretrends-based sensitivity analysis. By discarding information from the control group in the post-period, such sensitivity analyses fail to distinguish between these two cases. Because Case 1 and Case 2 have identical pretrends, both are perfectly robust --- under the sensitivity model in \eqref{eq: max set iden} --- to violations of parallel trends that are some magnitude, $M \geq 0$, of the greatest pre-period violation of parallel trends.

These two cases also suggest an important distinction between the plausibility of parallel trends and robustness to violations of parallel trends (due to concordance across multiple imputations). It is standard practice to interpret parallel pretrends as indicative of the plausibility of parallel trends. However, as a growing body of literature \citep{kahn-langlang2020,rothsantanna2023,egamiyamauchi2023} points out, pretrends are informative about post-period parallel trends only under additional assumptions. In the absence of such assumptions, the distinction between pretrends and concordance highlights a potential misinterpretation: Parallel pretrends, often taken as evidence for the plausibility of parallel trends, may instead reflect concordance --- that is, robustness of conclusions to certain violations of parallel trends.

\section{Discordance-based Sensitivity Analysis} \label{sec: discordance-based model}

I now propose an alternative, discordance-based sensitivity analysis that reflects the intuition for why Case 2 is more robust than Case 1. The central insight is that the assumptions underlying different imputations are not necessarily mutually exclusive. When these imputations are concordant, the validity of using the treated group's pre-period trend to impute the treated group's counterfactual does \textit{not} imply that the control group's outcome change is an invalid imputation. The assumptions supporting both imputations are mutually exclusive only when the imputations themselves are discordant. Therefore, to the extent that the conclusion about the \textrm{ATT} under the parallel trends imputation holds under multiple assumptions rather than just one, this conclusion depends less on the truth of the parallel trends assumption.

This general logic is closely related to that of bracketing relationships for DID \citep{dingli2019,hasegawasmall2017,yeetal2024,angristpischke2008,guryan2001,demetrescuetal2025}. Setting aside technical details, the general logic of a bracketing relationship is based on what the \textrm{ATT} would be under one assumption and what the \textrm{ATT} would be under another. In some cases, the \textrm{ATT} under one assumption is guaranteed to be larger than the \textrm{ATT} under another, thereby implying that, if one mistakenly assumes one assumption when the other is true, the \textrm{ATT} must be bounded between two values. The closer the \textrm{ATT}s under these two assumptions, the better.

This general logic remains important even without the bounding relationship. Whether or not the \textrm{ATT} under one assumption is guaranteed to be larger or smaller than under another, what matters is the degree of discordance. The closer the two \textrm{ATT}s are, the more robust the causal conclusion becomes under the parallel trends assumption.

To set up this framework, first define the discordance between the parallel trends imputation and any alternative imputation based on the average of detrended potential outcomes under no treatment as
\begin{align} \label{eq: concordance}
\lambda_{g,v} \coloneqq \abs{\bar{\epsilon}_{g,v}(0) - \bar{\epsilon}_{0,T}(0)}
\end{align}
for any $(g,v) \in \mathcal{G} \times \mathcal{V}$. Under this notion of discordance, defined in terms of average detrended outcomes, concordance in detrended outcomes does not imply concordance in levels (i.e., on the original outcome scale). Conversely, concordance in levels does not imply concordance on the detrended scale.

The discordance in \eqref{eq: concordance} provides a benchmark for bounding the error of the parallel trends imputation. This error of the parallel trends imputation, $\abs{\bar{\epsilon}_{1,T}(0) - \bar{\epsilon}_{0,T}(0)}$, can be bounded as follows:
\begin{align} \label{eq: pointwise concord mod}
\abs{\bar{\epsilon}_{1,T}(0) - \bar{\epsilon}_{0,T}(0)} \leq M \lambda_{g,v}.
\end{align}
The intuition behind \eqref{eq: pointwise concord mod} is straightforward. The true counterfactual is unobservable and, hence, so is the error of the parallel trends imputation, represented by the left-hand side of the inequality in \eqref{eq: pointwise concord mod}. However, the observable quantity, $\bar{\epsilon}_{g,v}(0)$, can serve as a proxy for the unobservable counterfactual. With this proxy, one can calibrate the error in the parallel trends imputation based on the observable discordance between $\bar{\epsilon}_{g,v}(0)$ and $\bar{\epsilon}_{0,T}(0)$. Specifically, the model assumes that the error of the parallel trends imputation is bounded by some magnitude, $M \geq 0$, of the discordance $\lambda_{g,v}$. While this bound treats $\bar{\epsilon}_{g,v}(0)$ as the proxy for the counterfactual, the logic could be reversed: In principle, one could instead treat $\bar{\epsilon}_{0,T}(0)$ as a proxy and use it to bound the error in using $\bar{\epsilon}_{g,v}(0)$ as a counterfactual imputation.

The pointwise bound in \eqref{eq: pointwise concord mod} relies on a single proxy. However, a more flexible approach aggregates information across all possible proxies. Specifically, one can form a convex combination of the pointwise bounds to obtain the following model:
\begin{align} \label{eq: sens model I}
\abs{\bar{\epsilon}_{1,T}(0) - \bar{\epsilon}_{0,T}(0)} \leq M \sum \limits_{(g,v) \in \mathcal{G} \times \mathcal{V}} w_{g,v} \lambda_{g,v},
\end{align}
where the weights $\bm{w} \coloneqq \left\{w_{g,v}\right\}_{(g,v) \in \mathcal{G} \times \mathcal{V}}$, which lie in the unit simplex over $\mathcal{G} \times \mathcal{V}$ denoted by
\begin{equation} \label{eq: unit simplex over G x V}
\begin{split}
\Delta_{\mathcal{G} \times \mathcal{V}} & \coloneqq \R^{\abs{\mathcal{G} \times \mathcal{V}}} \text{ with } w_{g,v} \geq 0 \text{ for all } (g,v) \text{ and } \sum \limits_{(g,v) \in \mathcal{G} \times \mathcal{V}} w_{g,v} = 1,
\end{split}
\end{equation}
represent the informativeness of each possible proxy for the counterfactual.

Proposition \ref{prop: tight sens bound} in the appendix shows that this upper bound in \eqref{eq: sens model I} is sharp for every choice of weights $\bm{w}$. That is, for every $\bm{w} \in \Delta_{\mathcal{G} \times \mathcal{V}}$, there exists at least one configuration of $\bar{\epsilon}_{1,T}(0)$, $\bar{\epsilon}_{0,T}(0)$, and $\bar{\epsilon}_{g,v}(0)$ for all $(g,v) \in \mathcal{G} \times \mathcal{V}$ consistent with \eqref{eq: sens model I} in which the inequality in \eqref{eq: sens model I} holds with equality. Thus, no strictly smaller upper bound is compatible with the model.

This weighted formulation in \eqref{eq: sens model I} allows researchers to incorporate multiple proxies and to encode judgments about the relative informativeness of each one. For instance, if all proxies are equally informative, one could assign uniform weights:
\begin{align*}
w_{g,v} = \dfrac{1}{\abs{\mathcal{G} \times \mathcal{V}}} \text{ for all } (g,v) \in \mathcal{G} \times \mathcal{V},
\end{align*}
which leads to the average-bound model of
\begin{align} \label{eq: avg sens model I}
\abs{\bar{\epsilon}_{1,T}(0) - \bar{\epsilon}_{0,T}(0)} \leq M \frac{1}{\abs{\mathcal{G} \times \mathcal{V}}} \sum_{(g,v) \in \mathcal{G} \times \mathcal{V}} \lambda_{g,v}.
\end{align}
Other choices reflect different judgements: time-decaying weights can give more importance to proxies from periods closer to treatment, while group-specific weights can emphasize proxies from certain groups (e.g., the treated group).

In general, however, the choice of weights is difficult to justify, as any specification ultimately encodes prior judgments that may vary across analysts. To guard against undue reliance on subjective weighting choices, I adopt a weighting scheme for the discordance-based model in \eqref{eq: sens model I}, analogous to the approach in \eqref{eq: max set iden} from \citet{rambachanroth2023}. This weighting scheme places all weight on whichever group-periods attain the largest discordance from the parallel trends imputation.

Formally, this weighting scheme, $\bm{w}^{\mathrm{max}}$, is 
\begin{equation} \label{eq: minimax weights I}
w^{\mathrm{max}}_{g,v} \coloneqq
\begin{cases}
1 / \abs{\mathcal{A}} & \text{if } (g,v) \in \mathcal{A} \\
0 & \text{otherwise}.
\end{cases}
\end{equation}
The set $\mathcal{A}$ in \eqref{eq: minimax weights I}, formally defined as
\begin{align*}
\mathcal{A} \coloneqq \{(g, v) \in \mathcal{G} \times \mathcal{V}: \lambda_{g,v} = \max \limits_{(g^{\prime}, v^{\prime}) \in \mathcal{G} \times \mathcal{V}} \lambda_{g^{\prime}, v^{\prime}}\},
\end{align*}
consists of the group–period pairs that attain the maximum discordance. Although the weights in \eqref{eq: minimax weights I} are chosen to be uniform on $\mathcal{A}$, any probability distribution supported on this set would yield the same bound.

These weights in \eqref{eq: minimax weights I} produce the sensitivity model of
\begin{align} \label{eq: minimax regret sens model I}
\abs{\bar{\epsilon}_{1,T}(0) - \bar{\epsilon}_{0,T}(0)} \leq M \max_{(g,v) \in \mathcal{G} \times \mathcal{V}} \lambda_{g,v}.
\end{align}
Under this model, the error of the parallel trends imputation is bounded by a magnitude, $M \geq 0$, of the largest discordance across all validation periods. This bound in \eqref{eq: minimax regret sens model I} is the central discordance-based sensitivity model proposed in this paper.

Returning now to the two stylized examples in \Cref{fig: DID cases}, the value of the sensitivity model in \eqref{eq: minimax regret sens model I} is clear. \Cref{fig: sens cases} below shows that, while both models yield identical sensitivity in Case 1, they diverge substantially in Case 2.
\begin{figure}[H] 
\includegraphics[width = \linewidth]{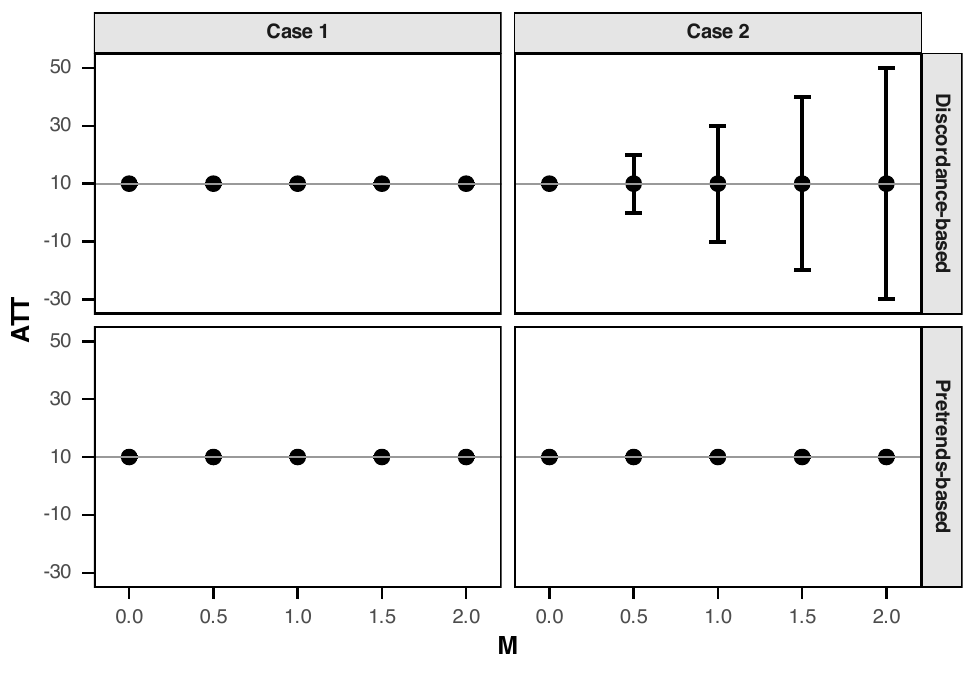}
\caption{Sensitivity bounds for the two stylized scenarios in \Cref{fig: DID cases}, with columns representing the cases and rows indicating the sensitivity model. The $x$-axis shows values of the sensitivity parameter $M \in \{0, 0.5, 1, 1.5, 2\}$, and the $y$-axis shows the corresponding values of the \textrm{ATT}. Solid points represent the \textrm{ATT} under the parallel trends assumption. Vertical bars indicate the identified set for the \textrm{ATT} under violations of the parallel trends assumption bounded by $M$ times the largest pre-period deviation.}
\label{fig: sens cases}
\end{figure}
\noindent Unlike the pretrends-based model in \eqref{eq: max set iden}, the discordance-based model in \eqref{eq: minimax regret sens model I} implies that the conclusion about the \textrm{ATT} depends more on parallel trends in Case 2 compared to Case 1.

The discordance-based model in \eqref{eq: minimax regret sens model I} consists of two crucial components: a conception of discordance in \eqref{eq: concordance} and a maximum-discordance weighting scheme in \eqref{eq: minimax weights I}. I first justify the use of discordance as a principled foundation for the credibility of causal conclusions drawn under the parallel trends assumption, showing that agreement across multiple counterfactual imputations strengthens such conclusions. In doing so, this justification helps clarify two concepts often conflated in sensitivity analysis --- (1) the plausibility of an identifying assumption and (2) the extent to which causal conclusions depend on it --- and shows that concordance implies parallel pretrends, though not vice versa. I then provide a decision-theoretic justification demonstrating that the weighting scheme concentrating all mass on the maximal discordance is the minimax solution to a regret-based optimization problem. Together, these justifications provide a stronger foundation for the overall bound in \eqref{eq: minimax regret sens model I}.

\subsection{Justification for Discordance} \label{ssec: concordance}

As noted above, one can use $\bar{\epsilon}_{g,v}(0)$ as a proxy for the counterfactual $\bar{\epsilon}_{1,T}(0)$, or vice versa, to bound the error of the counterfactual imputation. Following logic closely related to bracketing relationships, consider two scenarios:
\begin{enumerate}
\item The parallel trends imputation, $\bar{\epsilon}_{0,T}(0)$, is correct, but the proxy, $\bar{\epsilon}_{g,v}(0)$, is not.
\item The proxy, $\bar{\epsilon}_{g,v}(0)$, is correct, but the parallel trends imputation, $\bar{\epsilon}_{0,T}(0)$, is not.
\end{enumerate}
In the first scenario, the parallel trends assumption holds exactly, and there is no error in the imputation. However, if one mistakenly assumes the first scenario when the second is actually true, then the size of the unobservable error will depend on the observable discordance between $\bar{\epsilon}_{0,T}(0)$ and $\bar{\epsilon}_{g,v}(0)$. Insofar as the second scenario could hold, then, all else equal, greater concordance between $\bar{\epsilon}_{0,T}(0)$ and $\bar{\epsilon}_{g,v}(0)$ implies a smaller overall error of the parallel trends imputation.

To formalize this reasoning within a sensitivity analysis framework based on set (rather than point) identification, consider the bound in \eqref{eq: pointwise concord mod}, $\abs{\bar{\epsilon}_{1,T}(0) - \bar{\epsilon}_{0,T}(0)} \leq M \lambda_{g,v}$. This bound is the set identification analogue of the point identification condition in the first scenario, whereby the parallel trends imputation is correct ($\bar{\epsilon}_{1,T}(0) = \bar{\epsilon}_{0,T}(0)$) and the proxy is not ($\bar{\epsilon}_{1,T}(0) \neq \bar{\epsilon}_{g,v}(0)$). Specifically, when $M = 0$ and $\lambda_{g,v} > 0$, the bound in \eqref{eq: pointwise concord mod} collapses to the point-identifying condition of Scenario 1.

The second scenario corresponds to an alternative bound,
\begin{align} \label{eq: pointwise concord mod rev}
\abs{\bar{\epsilon}_{1,T}(0) - \bar{\epsilon}_{g,v}(0)} \leq M \lambda_{g,v}.
\end{align}
This bound in \eqref{eq: pointwise concord mod rev} is the set identification analogue of the point identification condition in which the proxy is correct ($\bar{\epsilon}_{1,T}(0) = \bar{\epsilon}_{g,v}(0)$) and the parallel trends imputation is not ($\bar{\epsilon}_{1,T}(0) \neq \bar{\epsilon}_{0,T}(0)$). Specifically, when $M = 0$ and $\lambda_{g,v} > 0$, the bound in \eqref{eq: pointwise concord mod rev} collapses to the point-identifying condition of Scenario 2.

In \eqref{eq: pointwise concord mod rev}, $\bar{\epsilon}_{g,v}(0)$ serves as the imputation and $\bar{\epsilon}_{0,T}(0)$ as the proxy. One could therefore redefine discordance as the distance between the imputation $\bar{\epsilon}_{g,v}(0)$ and the detrended mean of any other group-period index chosen to play the role of the proxy, not only $(0,T)$. Introducing this additional notation, however, would add complexity without changing the analysis: Because the proxy's group–period index in \eqref{eq: pointwise concord mod rev} is specifically $(0,T)$, and discordance in \eqref{eq: concordance} is defined using an absolute value, the discordance in \eqref{eq: pointwise concord mod rev} is identical to that in \eqref{eq: pointwise concord mod}.

Building on this alternative bound in \eqref{eq: pointwise concord mod rev}, recall that $\tau_{0,T}$ and $\tau_{g,v}$ refer to the imputed values of the \textrm{ATT} when one imputes the counterfactual using $\bar{\epsilon}_{0,T}(0)$ or $\bar{\epsilon}_{g,v}(0)$, respectively. When using the set restriction in \eqref{eq: pointwise concord mod}, analogously denote the resulting lower and upper bounds of the imputed \textrm{ATT} by $\underline{\tau}_{0,T}$ and $\overline{\tau}_{0,T}$. By contrast, if one instead uses the set restriction in \eqref{eq: pointwise concord mod rev}, I denote the corresponding bounds by $\underline{\tau}_{g,v}$ and $\overline{\tau}_{g,v}$.

Now suppose a prior distribution, $\pi$, over whether these two restrictions in \eqref{eq: pointwise concord mod} and \eqref{eq: pointwise concord mod rev} hold. The corresponding probabilities are defined as follows:
\begin{itemize}
\item $p_{(0,T),(g,v)}$ denotes the probability that both restrictions hold;
\item $p_{(0,T),\neg(g,v)}$ and $p_{\neg(0,T),(g,v)}$ denote the respective probabilities that only one holds; and
\item $p_{\neg(0,T),\neg(g,v)}$ denotes the probability that neither holds.
\end{itemize}
\Cref{prop: concordance} below shows that, so long as $p_{\neg(0,T),(g,v)} > 0$, the expected errors of the \textrm{ATT}'s bounds under the set restriction in \eqref{eq: pointwise concord mod} are increasing in discordance.
\begin{proposition} \label{prop: concordance}
If $p_{\neg(0,T),(g,v)} > 0$, then, for any fixed $M \geq 0$, both $\E_{\pi}\left[\abs{\overline{\tau} - \overline{\tau}_{0,T}}\right]$ and $\E_{\pi}\left[\abs{\underline{\tau} - \underline{\tau}_{0,T}}\right]$ are strictly increasing in $\lambda_{g,v}$, where $\E_{\pi}\left[\cdot\right]$ denotes the expectation with respect to the joint prior distribution, $\pi$.
\end{proposition}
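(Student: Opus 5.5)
The approach is to compute each restriction's bounds explicitly, split $\E_{\pi}$ over the four events of the prior, and show that only the event in which the proxy restriction alone holds carries a term that varies with $\lambda_{g,v}$, and that this term is linear with slope $1$.

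\textbf{Step 1: explicit bounds.} Since $\tau = \bar{\epsilon}_{1,T}(1) - \bar{\epsilon}_{1,T}(0)$ by \eqref{eq: ATT V} and $\bar{\epsilon}_{1,T}(1)$ is observable, each restriction confines $\bar{\epsilon}_{1,T}(0)$ to an interval.
\begin{itemize}
\item Under \eqref{eq: pointwise concord mod}, $\bar{\epsilon}_{1,T}(0) \in [\bar{\epsilon}_{0,T}(0) - M\lambda_{g,v}, \, \bar{\epsilon}_{0,T}(0) + M\lambda_{g,v}]$. Hence $\underline{\tau}_{0,T} = \tau_{0,T} - M\lambda_{g,v}$ and $\overline{\tau}_{0,T} = \tau_{0,T} + M\lambda_{g,v}$.
\item Under \eqref{eq: pointwise concord mod rev}, the same reasoning gives $\underline{\tau}_{g,v} = \tau_{g,v} - M\lambda_{g,v}$ and $\overline{\tau}_{g,v} = \tau_{g,v} + M\lambda_{g,v}$.
\item Because $\tau_{0,T} - \tau_{g,v} = \bar{\epsilon}_{g,v}(0) - \bar{\epsilon}_{0,T}(0)$, the definition \eqref{eq: concordance} gives
\[
\abs{\overline{\tau}_{g,v} - \overline{\tau}_{0,T}} = \abs{\underline{\tau}_{g,v} - \underline{\tau}_{0,T}} = \lambda_{g,v}.
\]
\end{itemize}
The $M\lambda_{g,v}$ widenings cancel in these differences. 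This is why the conclusion holds for every fixed $M \geq 0$, including $M = 0$.

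\textbf{Step 2: decomposition over the prior.} I would write
\[
\E_{\pi}\left[\abs{\overline{\tau} - \overline{\tau}_{0,T}}\right]
\]
as a sum over the four events of $\pi$, where $\overline{\tau}$ is the upper bound implied by whichever restriction is true.
\begin{itemize}
\item On the events $(0,T),(g,v)$ and $(0,T),\neg(g,v)$, restriction \eqref{eq: pointwise concord mod} is valid. I would adopt the convention that the true bound then coincides with $\overline{\tau}_{0,T}$, so these terms contribute zero error.
\item On the event $\neg(0,T),(g,v)$, the true bound is $\overline{\tau}_{g,v}$. By Step 1 the error equals $\lambda_{g,v}$, contributing $p_{\neg(0,T),(g,v)}\,\lambda_{g,v}$.
\item On the event $\neg(0,T),\neg(g,v)$, neither restriction pins down the bound. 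I would treat the conditional expected error on this event as not depending on $\lambda_{g,v}$, or at least as nondecreasing in it, for example because the prior's behaviour there is specified independently of the discordance.
\end{itemize}
The expectation then becomes $p_{\neg(0,T),(g,v)}\,\lambda_{g,v} + C$, with $C$ constant (or nondecreasing) in $\lambda_{g,v}$. It is therefore strictly increasing whenever $p_{\neg(0,T),(g,v)} > 0$. The lower-bound case is identical, using the second equality in Step 1.

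\textbf{Main obstacle.} The hardest part is justifying the conventions in Step 2, especially for the events where both restrictions hold or where neither does. When both hold, the true identified set is arguably the intersection of the two intervals. Its upper endpoint is $\min(\overline{\tau}_{0,T}, \overline{\tau}_{g,v})$, which can differ from $\overline{\tau}_{0,T}$ by $\lambda_{g,v}$ rather than $0$. I would handle this in one of two ways:
\begin{itemize}
\item read ``holds'' as ``is the operative restriction'', so that $\overline{\tau} = \overline{\tau}_{0,T}$ on this event; or
\item note that under the intersection the error on this event is either $0$ or $\lambda_{g,v}$, which is nondecreasing in $\lambda_{g,v}$, so strict monotonicity is preserved.
\end{itemize}
For the ``neither'' event I would state explicitly that the prior's conditional error there is invariant to $\lambda_{g,v}$. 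Either way, the strictness comes entirely from the $p_{\neg(0,T),(g,v)}\,\lambda_{g,v}$ term.
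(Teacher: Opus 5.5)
Your proposal is correct and shares the paper's skeleton. Both use the four-event decomposition under $\pi$. Both assign zero error when only \eqref{eq: pointwise concord mod} holds and error $\lambda_{g,v}$ when only \eqref{eq: pointwise concord mod rev} holds. When both hold, the paper takes the intersection of the two intervals, with upper endpoint $\min\{\overline{\tau}_{0,T},\overline{\tau}_{g,v}\}$, and case-splits on which is smaller. That is exactly your second option; the ``operative restriction'' reading is not what the paper uses.

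The substantive difference is how you reach absolute errors. The paper works with $\E_{\pi}[(\overline{\tau}-\overline{\tau}_{0,T})^2]$ and then transfers strict monotonicity to $\E_{\pi}[\lvert\overline{\tau}-\overline{\tau}_{0,T}\rvert]$ by appealing to $\lvert x\rvert=\sqrt{x^2}$. As a general inference this fails, since $\E_{\pi}\lvert X\rvert\neq\sqrt{\E_{\pi}[X^2]}$ unless $\lvert X\rvert$ is degenerate. Your direct computation gives $p_{\neg(0,T),(g,v)}\lambda_{g,v}$, plus either $0$ or $p_{(0,T),(g,v)}\lambda_{g,v}$, plus the ``neither'' term. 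It proves the absolute-error statement as written without that step.

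You are also more precise about the ``neither'' event. The paper's proof carries $\overline{\tau}_{\neg(0,T),\neg(g,v)}$ as an unspecified quantity. But $\overline{\tau}_{0,T}=\tau_{0,T}+M\lambda_{g,v}$ moves with $\lambda_{g,v}$ when $M>0$. So what must be held invariant is the conditional error, as you state. This matches the main text's description of that error as ``fixed but unknown.''

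Two caveats apply to both arguments:
\begin{itemize}
\item On the ``both'' event the intersection is nonempty only if $M\geq 1/2$ or $\lambda_{g,v}=0$, so for smaller $M$ that event must carry zero prior mass.
\item The case split presumes the sign of $\bar{\epsilon}_{0,T}(0)-\bar{\epsilon}_{g,v}(0)$ is held fixed as $\lambda_{g,v}$ varies.
\end{itemize}
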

\noindent \Cref{prop: concordance} directly ties discordance to the robustness of causal conclusions.

\Cref{prop: concordance} also clarifies the distinction between the plausibility of the parallel trends identification condition --- formally, its marginal probability $p_{(0,T),(g,v)} + p_{(0,T),\neg(g,v)}$ --- and the extent to which causal conclusions depend on that condition. To explicate this distinction between dependence on (rather than plausibility of) the parallel trends assumption, recall the two different set restrictions for the parallel trends imputation, $\bar{\epsilon}_{0,T}(0)$, and that of $\bar{\epsilon}_{g,v}(0)$:
\begin{itemize}
\item \Cref{eq: pointwise concord mod}: $\abs{\bar{\epsilon}_{1,T}(0) - \bar{\epsilon}_{0,T}(0)} \leq M \lambda_{g,v}$, 
\item \Cref{eq: pointwise concord mod rev}: $\abs{\bar{\epsilon}_{1,T}(0) - \bar{\epsilon}_{g,v}(0)} \leq M \lambda_{g,v}$.
\end{itemize}
If both restrictions are false, then the error of the bounds on the \textrm{ATT} under \eqref{eq: pointwise concord mod} is some fixed but unknown value. However, if \eqref{eq: pointwise concord mod rev} is true and \eqref{eq: pointwise concord mod} is false, then the error of the \textrm{ATT}'s bounds under the incorrect restriction is simply the difference between the two bounding sets. This difference is driven entirely by the discordance between the parallel trends imputation, $\bar{\epsilon}_{0,T}(0)$, and the alternative imputation, $\bar{\epsilon}_{g,v}(0)$. Thus, discordance is not directly about plausibility. Rather, discordance captures the degree of dependence of inferences on the parallel trends identification condition; that is, if the condition were violated, how different would the conclusions be?

That said, concordance is informative only insofar as the identifying condition underlying the imputed quantity $\bar{\epsilon}_{g,v}(0)$ is itself at least somewhat plausible. A conclusion that would change little under a different identifying assumption is not meaningful if that alternative assumption is wholly implausible. This is why \Cref{prop: concordance} includes in its if–then statement the antecedent condition $p{\neg(0,T),(g,v)} > 0$, reflecting a positive probability that the parallel trends assumption is false and that the alternative assumption holds. 

\Cref{prop: concordance} offers a formal justification for why one ought to quantify robustness to violations of parallel trends in terms of discordance between the parallel trends imputation and alternative imputations. Importantly, a discordance-based model accords with much of the intuition underlying parallel pretrends. I now show the connection between these two modes in which concordance implies parallel pretrends, but the converse is not true.

To formalize this argument, consider the detrended means of each group ($G = 1$ and $G = 0$) for any validation period in $\mathcal{V}$. The joint discordance is defined as a weighted sum of the absolute discordances of these detrended means from the benchmark $\bar{\epsilon}_{0,T}(0)$. \Cref{prop: concordance parallel trends} shows that, for any $v \in \mathcal{V}$, a small upper bound on joint discordance is sufficient to ensure a small violation of parallel pretrends.
\begin{proposition} \label{prop: concordance parallel trends}
For any pre-treatment validation period, $v \in \mathcal{V}$, the violation of parallel trends, $\abs{\bar{\epsilon}_{1,v}(0) - \bar{\epsilon}_{0,v}(0)}$, is bounded above by
\begin{align} \label{eq: pretrends bound}
\abs{\bar{\epsilon}_{1,v}(0) - \bar{\epsilon}_{0,v}(0)} \leq 2 \max\left\{\lambda_{1,v}, \, \lambda_{0,v} \right\},
\end{align}
where the maximum on the right-hand side is an upper bound on the joint discordance between $\bar{\epsilon}_{0,T}(0)$ and the pair $(\bar{\epsilon}_{1,v}(0), \, \bar{\epsilon}_{0,v}(0))$.
\end{proposition}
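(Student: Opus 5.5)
The plan is to pass through the common benchmark $\bar{\epsilon}_{0,T}(0)$ and apply the triangle inequality. Concretely, I would add and subtract the parallel trends imputation inside the absolute value:
\begin{align*}
\abs{\bar{\epsilon}_{1,v}(0) - \bar{\epsilon}_{0,v}(0)} = \abs{\left(\bar{\epsilon}_{1,v}(0) - \bar{\epsilon}_{0,T}(0)\right) - \left(\bar{\epsilon}_{0,v}(0) - \bar{\epsilon}_{0,T}(0)\right)}.
\end{align*}
The triangle inequality then bounds the right-hand side by $\abs{\bar{\epsilon}_{1,v}(0) - \bar{\epsilon}_{0,T}(0)} + \abs{\bar{\epsilon}_{0,v}(0) - \bar{\epsilon}_{0,T}(0)}$. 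By the definition of discordance in \eqref{eq: concordance}, this sum is exactly $\lambda_{1,v} + \lambda_{0,v}$.

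Next, I would bound the sum of two nonnegative numbers by twice their maximum, $\lambda_{1,v} + \lambda_{0,v} \leq 2\max\{\lambda_{1,v}, \lambda_{0,v}\}$, which gives \eqref{eq: pretrends bound}. This holds for every $v \in \mathcal{V}$. Throughout, $v < T$, so both detrended means are observable untreated quantities.

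For the accompanying claim that the maximum bounds the joint discordance, I would define the joint discordance as $w\lambda_{1,v} + (1-w)\lambda_{0,v}$ with $w \in [0,1]$. Any such convex combination is at most $\max\{\lambda_{1,v}, \lambda_{0,v}\}$, and the maximum is attained at the weighting that puts all mass on the larger term, in the spirit of \eqref{eq: minimax weights I}.

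No step here is a serious obstacle. The only point needing care is interpretive: the factor $2$ must be tied to the joint discordance through the maximum, rather than through an arbitrary weighted sum. With unequal weights, $\lambda_{1,v} + \lambda_{0,v}$ need not be bounded by twice the weighted average. It is bounded by twice the maximum, which is why the statement is phrased in terms of the maximum.

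Finally, I would note briefly why the converse fails, since the paper uses this point next. Take $\lambda_{1,v} = \lambda_{0,v}$ large, with both pre-period means equal but far from $\bar{\epsilon}_{0,T}(0)$, as in Case 2 of \Cref{fig: DID cases}. Then the left-hand side of \eqref{eq: pretrends bound} is zero while the discordance is large.
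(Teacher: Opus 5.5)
Your proof is correct and matches the paper's argument: you add and subtract $\bar{\epsilon}_{0,T}(0)$, apply the triangle inequality to get $\lambda_{1,v} + \lambda_{0,v}$, and bound this sum by $2\max\{\lambda_{1,v}, \lambda_{0,v}\}$. Your additional remarks on the convex-combination reading of joint discordance and on why the converse fails are also sound.
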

\noindent The proposition establishes that when the joint discordance is constrained to be small --- i.e., its upper bound is small --- then the violation of parallel pretrends must also be small. In the limiting case in which the joint discordance is constrained to be zero, parallel pretrends cannot be violated. However, the converse does not hold: a small violation of parallel pretrends does not imply that the upper bound on discordance is small.

\subsection{Justification for Maximum-Discordance Weighting} \label{ssec: weighting}

As noted above, the discordance-based sensitivity model in \eqref{eq: minimax regret sens model I} has two key components: a conception of discordance in \eqref{eq: concordance} and a maximum-discordance weighting scheme in \eqref{eq: minimax weights I}. Thus far, I have provided a foundation justifying why discordance offers a reasonable basis for restricting violations of the parallel trends assumption. I now turn to justifying the specific weights in \eqref{eq: minimax regret sens model I}, which concentrate all mass on the $(g,v) \in \mathcal{G} \times \mathcal{V}$ that attain the maximum discordance.

The maximum-discordance weighting scheme has an important decision-theoretic justification as the solution to an adversarial game. In this game, both the worst-case violation of parallel trends and the bound implied by any proposed weighting scheme are determined under the same sensitivity model for a given $M \geq 0$. One then selects weights to minimize worst-case regret, defined as the largest possible gap between the worst-case violation of parallel trends and the bound implied by the selected weighting scheme. The weights in \eqref{eq: minimax weights I} are those that solve this minimax regret problem.

As others have noted in different sensitivity analysis frameworks \citep[e.g.,][]{cohenetal2020}, it is intuitive to think of an adversarial game as a way to justify using the worst-case (widest) set restriction for any given $M \geq 0$. If the adversary represents future counterclaims against a study's causal conclusion, then adopting the widest bounds ensures that no counterclaim can overturn the researcher's conclusion for that value of $M \geq 0$. If a researcher can rule out certain values of the \textrm{ATT} under the worst-case (widest) sensitivity interval, then those values must also be ruled out under any narrower intervals.

To formally establish that the weights in \eqref{eq: minimax weights I} minimize worst-case regret, let $\psi \coloneqq \abs{\bar{\epsilon}_{1,T}(0) - \bar{\epsilon}_{0,T}(0)}$ denote the absolute error of the parallel trends imputation. Under the sensitivity model in \eqref{eq: sens model I}, $\psi$ is only partially identified: For all possible weights $\bm{w} \in \Delta_{\mathcal{G}\times\mathcal{V}}$, where $\Delta_{\mathcal{G}\times\mathcal{V}}$ denotes the unit simplex over $\mathcal{G} \times \mathcal{V}$ defined in \eqref{eq: unit simplex over G x V}, the model in \eqref{eq: sens model I} implies
\begin{align*}
0 \leq \psi \leq M \sum_{(g,v) \in \mathcal{G} \times \mathcal{V}} w_{g,v}\lambda_{g,v}.
\end{align*}
Thus, the set of all possible values for $\psi$ under the sensitivity model is
\begin{align} \label{eq: Psi}
\Psi & \coloneqq \left\{\psi \in \mathbb{R}_{\geq 0}: \psi \leq M \sum_{(g,v) \in \mathcal{G} \times \mathcal{V}} w_{g,v} \lambda_{g,v} \text{ for all } \bm{w} \in \Delta_{\mathcal{G}\times\mathcal{V}} \right\},
\end{align}
where $\mathbb{R}_{\geq 0}$ is the set of nonnegative reals.

For any $\bm{w} \in \Delta_{\mathcal{G} \times \mathcal{V}}$, the corresponding worst-case regret is how much larger the absolute error of the parallel-trends imputation could be --- given that $\psi$ must lie in $\Psi$ --- relative to the bound implied by that weighting scheme. Formally, this worst-case regret is
\begin{align} \label{eq: worst-case regret I}
R(\bm{w}) & \coloneqq \sup_{\psi\in\Psi} \left\{\psi - M \sum_{(g,v) \in \mathcal{G} \times \mathcal{V}} w_{g,v} \lambda_{g,v} \right\}.
\end{align}
The $\sup_{\psi\in\Psi}\{\cdot\}$ in \eqref{eq: worst-case regret I} denotes the least upper bound of the set of all values that the expression in braces, $\{\cdot\}$, attains as $\psi$ ranges over $\Psi$.

\Cref{prop: equivalence} now establishes that the model in \eqref{eq: minimax regret sens model I}, which places all weight on the group-periods that attain the maximum discordance, is equivalent to the model in \eqref{eq: sens model I} with weights that minimize this worst-case regret in \eqref{eq: worst-case regret I}.
\begin{proposition}
\label{prop: equivalence}
The sensitivity model in \eqref{eq: minimax regret sens model I}, which uses the weighting scheme $\bm{w}^{\max}$ in \eqref{eq: minimax weights I}, is equivalent to the general model in \eqref{eq: sens model I} with weights that solve the following minimax regret optimization problem:
\begin{align} \label{eq: minimax sol}
\bm{w} = \argmin \limits_{\bm{w} \in \Delta_{\mathcal{G} \times \mathcal{V}}} R(\bm{w}).
\end{align}
\end{proposition}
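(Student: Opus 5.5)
The plan is to compute $\Psi$ and $R(\bm{w})$ in closed form, minimize $R$ explicitly over the simplex, and then check that every minimizer gives the same bound as $\bm{w}^{\max}$.

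\emph{Step 1: characterize $\Psi$.} Set $\lambda_{\min} \coloneqq \min_{(g,v)} \lambda_{g,v}$ and $\lambda_{\max} \coloneqq \max_{(g,v)} \lambda_{g,v}$. The map $\bm{w} \mapsto \sum_{(g,v)} w_{g,v}\lambda_{g,v}$ is linear on the compact simplex $\Delta_{\mathcal{G}\times\mathcal{V}}$. Its infimum is therefore attained at a vertex and equals $\lambda_{\min}$. Requiring $\psi \leq M\sum w_{g,v}\lambda_{g,v}$ for all $\bm{w}$ is thus the same as requiring $\psi \leq M\lambda_{\min}$, because $M \geq 0$. Since every $\lambda_{g,v} \geq 0$, the set $\Psi = [0, M\lambda_{\min}]$ is nonempty.

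\emph{Step 2: compute $R$.} The expression inside the supremum in \eqref{eq: worst-case regret I} is increasing in $\psi$. Its supremum over $\Psi$ is therefore attained at $\psi = M\lambda_{\min}$, which gives
\begin{align*}
R(\bm{w}) = M\Bigl(\lambda_{\min} - \sum_{(g,v)\in\mathcal{G}\times\mathcal{V}} w_{g,v}\lambda_{g,v}\Bigr).
\end{align*}
Minimizing $R$ over the simplex is then equivalent to maximizing the linear functional $\sum w_{g,v}\lambda_{g,v}$. That maximum is $\lambda_{\max}$. When $M>0$, a weight vector $\bm{w}$ attains it if and only if it puts zero mass outside $\mathcal{A}$. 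The reason is that $\sum w_{g,v}\lambda_{g,v} = \lambda_{\max} - \sum_{(g,v)\notin\mathcal{A}} w_{g,v}(\lambda_{\max}-\lambda_{g,v})$, and each term in the last sum is strictly positive whenever $w_{g,v}>0$ and $(g,v)\notin\mathcal{A}$. The argmin in \eqref{eq: minimax sol} is therefore exactly the set of distributions supported on $\mathcal{A}$. This set contains $\bm{w}^{\max}$ from \eqref{eq: minimax weights I}.

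\emph{Step 3: equivalence of the models.} For any minimizer $\bm{w}$, the bound in \eqref{eq: sens model I} equals $M\sum_{(g,v)\in\mathcal{A}} w_{g,v}\lambda_{\max} = M\lambda_{\max}$. This is precisely the right-hand side of \eqref{eq: minimax regret sens model I}, so the two sensitivity models define the same restriction on $\bar{\epsilon}_{1,T}(0)$.

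The main obstacle is not technical but interpretive: the argmin is a set, not a single point. This has to be handled by showing that equivalence holds for every element of the argmin, as Step 3 does; this also matches the remark after \eqref{eq: minimax weights I}. The degenerate cases need a brief separate check. If $M=0$, or if all $\lambda_{g,v}$ are equal, then every $\bm{w}$ minimizes $R$. Every such $\bm{w}$ nonetheless yields the bound $M\lambda_{\max}$, because in those cases $M\sum w_{g,v}\lambda_{g,v}$ equals $0$ or the common value $M\lambda_{\max}$. Hence the models still coincide.
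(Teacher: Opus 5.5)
Your proof is correct and has the same skeleton as the paper's. You pull the $\bm{w}$-free term $\sup_{\psi\in\Psi}\psi$ out of the supremum in \eqref{eq: worst-case regret I}, then minimize regret by maximizing the linear functional $\sum w_{g,v}\lambda_{g,v}$ over the simplex, whose maximizers are the distributions supported on $\mathcal{A}$. The one substantive difference is the value of that constant. You read \eqref{eq: Psi} literally, with the constraint holding for \emph{all} $\bm{w}$, which gives $\Psi=[0,M\lambda_{\min}]$ and $R(\bm{w})=M(\lambda_{\min}-\sum w_{g,v}\lambda_{g,v})\le 0$. The paper instead asserts $\sup_{\psi\in\Psi}\psi=\sup_{\bm{w}'}M\sum w'_{g,v}\lambda_{g,v}=M\lambda_{\max}$. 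That holds only if $\Psi$ is taken to be the union over $\bm{w}$ of the intervals $[0,M\sum w_{g,v}\lambda_{g,v}]$, i.e., the errors allowed by \emph{some} weighting. Under that reading $R(\bm{w})=M(\lambda_{\max}-\sum w_{g,v}\lambda_{g,v})\ge 0$, and the minimal regret is exactly $0$. Since either constant is independent of $\bm{w}$, the argmin, and hence the equivalence, is the same under both readings. Your computation is the one consistent with the definition as written. The paper's is the one consistent with its regret narrative, where the optimal weights close a nonnegative gap to the widest bound the model permits; under your reading they instead drive an already nonpositive regret as low as possible. You also fill in details the paper leaves implicit: you treat the argmin explicitly as a set, give the strict-positivity argument for the `only if' direction, and handle the $M=0$ case.
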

\noindent The optimization problem in \eqref{eq: minimax sol}, which is expressed in terms of the error of the parallel-trends imputation for a given weighting scheme, implicitly targets the width of the \textrm{ATT}'s bounds implied by that scheme, effectively selecting the weighting scheme that yields the widest identified set for the \textrm{ATT} permitted by the model.

Crucially, for any $M \geq 0$, the identified intervals for all $(g,v) \in \mathcal{G} \times \mathcal{V}$ share the same center --- the \textrm{ATT} implied by the parallel trends imputation. Different choices of weights, $\bm{w}$, affect only the width of the bounds, not their location. The minimax weights correspond to the maximum possible width across all feasible weighting schemes. Thus, any interval constructed using weights other than the worst-case weights must be bracketed within the worst-case interval. Consequently, no alternative weighting scheme can imply a value of the \textrm{ATT} that falls outside the researcher's bounds, and any such alternative necessarily implies less sensitivity than the minimax choice.

\section{Extension to Multiple Post-Periods}  \label{sec: multiple post periods}

Up to this point, the discordance-based sensitivity model has focused on the simple setting of a single post-treatment period. To accommodate settings with multiple post-treatment periods, it is helpful to begin by clarifying the causal estimands of interest. With several post-periods, the treated group $G = 1$ now has an \textrm{ATT} for each post-treatment period. I first consider the case in which the target is an aggregated estimand --- namely, a linear combination of these period-specific \textrm{ATT}s --- which requires only a minor modification of the notation introduced above. I then consider the case in which each post-period \textrm{ATT} is itself an estimand of interest, a setting that necessitates additional notation.

Let $T_0 \geq 1$ denote the number of pre-treatment periods and let $T_1 \geq 1$ denote the number of post-treatment periods, so that $T_0 + T_1 = T$. Define the sets of pre- and post-treatment periods as $\mathcal{T}_{\mathrm{pre}} \coloneqq \{1, \ldots, T_0\}$ and $\mathcal{T}_{\mathrm{post}} \coloneqq \{T_0 + 1, \ldots, T\}$, respectively. In the earlier setup with a single post-period, $T_1 = 1$ and $\mathcal{T}_{\mathrm{post}} = \{T\}$.

The treated group $G = 1$ first enters treatment in period $T_0 + 1$. For each post-treatment period $s \in \mathcal{T}_{\mathrm{post}}$, define the \textrm{ATT} for each post-period as
\begin{align}
\tau^s \coloneqq \bar{Y}_{1,s}(1) - \bar{Y}_{1,s}(0).
\end{align}
With multiple post-treatment periods, the treated group $G = 1$ therefore has a distinct \textrm{ATT} for each $s \in \mathcal{T}_{\mathrm{post}}$.

Denote the vector of post-treatment \textrm{ATT}s for group $G = 1$, with one component for each post-treatment period, by
\begin{align}
\bm{\tau} \coloneqq \left( \tau^{T_0 + 1},\, \ldots,\, \tau^{T} \right)^{\top},
\end{align}
where the superscript $^{\top}$ denotes transposition. Following \citet[][p.~2561]{rambachanroth2023}, define an aggregated estimand as $\theta \coloneqq \bm{\ell}^{\top}\bm{\tau}$, where $\bm{\ell}$ is a known vector of the same dimension as $\bm{\tau}$. This formulation accommodates simple averages, weighted averages, and other linear combinations of the post-treatment \textrm{ATT}s.

When the estimand of interest is this aggregated quantity $\theta$, we can
replace the post-period quantity $Y_{g,T}(z)$ from the earlier framework with $\bm{\ell}^{\top}\bm{Y}_{g,\mathrm{post}}(z)$, where $\bm{Y}_{g,\mathrm{post}}(z)$ collects the treated ($z = 1$) or control ($z = 0$) potential outcomes for group $g$ across all
post-treatment periods $s \in \mathcal{T}_{\mathrm{post}}$. With this replacement, all algebraic steps and arguments from the single-period setting apply directly. The sensitivity analysis then evaluates robustness for the aggregated estimand $\theta$.

By contrast, when the analysis targets the \textrm{ATT}s for each post-treatment period individually, additional notation is in order. This need for additional notation is because outcomes should be detrended via a group's mean in pre-treatment periods so that we have an uncontaminated prediction of the treated group's mean counterfactual outcome in the post-period. Using a period already affected by treatment would presumably yield a distorted prediction of the average outcome that would have arisen under no treatment. For the control group, which is not treated in the post-period, the same reference period is used to ensure symmetry, so that differences between the treated group's counterfactual detrended mean and the control group's observable detrended mean reflect deviations from parallel trends rather than artifacts of differing baseline periods.

Therefore, redefine the detrended outcome for group $g$ in period $t$ using the most recent pre-treatment period, which, for any period $t \in \{1,\ldots,T\}$, is $\min\{t - 1, T_0\}$. Formally, this detrended outcome is
\begin{align} \label{eq: detrended outcome IV}
\epsilon_{g,t} \coloneqq Y_{g,t} - \bar{Y}_{g,\min\{t - 1, T_0\}}.
\end{align}
When $t$ is itself a pre-treatment period, \eqref{eq: detrended outcome IV} reduces to detrending the outcome using the group's mean in the immediately preceding period, just as in the single post-period setup. However, when $t$ is a post-treatment period, the outcome is detrended using the most recent pre-treatment period, ensuring that the detrending baseline is not contaminated by treatment.

As before, detrending preserves the \textrm{ATT}. The \textrm{ATT} expressed in
terms of these detrended outcomes in \eqref{eq: detrended outcome IV} is algebraically equivalent to the \textrm{ATT} defined using the raw potential outcomes. The difference now is that the treated group $G = 1$ has a distinct \textrm{ATT} for each post-treatment period, $s \in \mathcal{T}_{\mathrm{post}}$.

For any such $s$, I analogously define a sensitivity model in which the
violation of parallel trends is governed by some magnitude $M \geq 0$ of the
maximum discordance between the parallel trends imputation and any other
imputation based on pre-treatment validation periods. Let
$\mathcal{V} \subseteq \mathcal{T}_{\mathrm{pre}}$ denote the set of
pre-treatment validation periods selected for analysis, with each
$v \in \mathcal{V}$ indexing a single pre-treatment period. Denote the
discordance between the parallel trends imputation and the average detrended
outcome based on group $g$ and validation period $v$ as $\lambda^{s}_{g,v} \coloneqq \abs{\bar{\epsilon}_{g,v} - \bar{\epsilon}_{0,s}(0)}$. Then an immediate extension of the discordance-based sensitivity model in \eqref{eq: minimax regret sens model I} is
\begin{align}
\label{eq: minimax regret sens model mult post}
\abs{\bar{\epsilon}_{1,s}(0) - \bar{\epsilon}_{0,s}(0)} \leq M \max_{(g,v) \in \mathcal{G} \times \mathcal{V}} \lambda^{s}_{g,v}
\end{align}
for all $s \in \mathcal{T}_{\mathrm{post}}$.

Under this model in \eqref{eq: minimax regret sens model mult post}, the sensitivity analysis applies period-by-period in exactly the same manner as in the single post-period case. Various ``shock'' scenarios --- such as a control-only shock, common shock, or more complex temporal patterns --- remain fully encompassed by this framework. These scenarios merely alter the observed values of the discordance terms $\lambda^{s}_{g,v}$, which in turn determine the width of the sensitivity bounds on $\textrm{ATT}^s$. Thus, even with multiple post-treatment periods, the core principle persists: Sensitivity is governed by the extent to which the parallel trends imputation and alternative imputations based on pre-treatment periods disagree.

The distinction between targeting the aggregated estimand $\theta$ and the period-specific \textrm{ATT}s is important. Because the maximization in \eqref{eq: minimax regret sens model mult post} is nonlinear, a sensitivity analysis applied separately to each post-period and then aggregated using $\bm{\ell}$ is generally not equivalent to one applied directly to $\theta$. Therefore, when the parameter of interest is the linear combination $\theta$, the sensitivity analysis should be applied directly to $\theta$, as the worst-case discordance framework is designed to provide a single robustness assessment for a single estimand --- whether aggregated or not --- rather than an aggregation of robustness assessments computed separately for multiple estimands.

\section{Application} \label{sec: application}

I now present an applied example that illustrates how the discordance-based sensitivity analysis can yield different conclusions than existing approaches. The example is drawn from \citet{wilse-samson2013}, which investigates whether the sudden withdrawal of foreign mine labor from Malawi and Mozambique in the mid-1970s affected electoral support for apartheid policies in South African districts that depended on that labor. The aim of the analysis is to estimate the average effect of this labor supply shock on the voting behavior of White South Africans in mining districts. This application fits squarely within the canonical DID framework, featuring a well-defined treatment shock, variation in exposure across districts, and observations from both before and after treatment. Related studies have also leveraged this labor supply shock in their own DID designs \citep[e.g.,][]{dinkelmanmariotti2016,dinkelmanetal2024}.

As \citet{wilse-samson2013} explains --- and as comprehensive historical accounts detail \citep{dekiewiet1941, thompson1990, welsh2009, dubow2014, omeara1996, feinstein2005} --- South Africa's apartheid regime faced a structural tension: On the one hand, the regime required large, concentrated labor forces to power urban industries like mining. Yet, on the other hand, the apartheid regime sought to limit Black urbanization out of fear of Black majority rule, instead relegating Black South Africans to fragmented ``homelands.'' Presumably in an effort to manage this tension, the apartheid regime relied on foreign migrant labor to work in the mines, a strategy that many historians interpret as an attempt to reduce the political risks posed by a large, urbanized Black domestic workforce.

This balance was disrupted by a dramatic labor shock between South Africa's 1974 and 1977 national elections \citep{crush1993}. In 1974, a plane carrying Malawian migrant laborers en route to South Africa crashed, killing 77 passengers. In response, Malawian President Hastings Banda suspended all migrant labor to South Africa. Around the same time, Portugal's \textit{Revolu\c{c}\~ao dos Cravos} in 1974 and Mozambique's subsequent independence in 1975 sharply curtailed the flow of Mozambican labor. These events --- external to South Africa's mining sector --- reduced the number of foreign mine workers from 336{,}000 in 1974 to 208{,}000 in 1977 \citep{wilse-samson2013}.

Likely in response to pressure from the mining industry, the apartheid regime allowed companies to expand recruitment of domestic Black South Africans. Their numbers rose between 1974 and 1977 from 86{,}000 to 214{,}000, often under longer contracts and with higher wages, contributing to a more stable urban workforce. As a result, the share of foreign miners fell from 78\% in 1974 to under 50\% in 1977 \citep{wilse-samson2013}. Given the apartheid regime's aforementioned structural tension, this shift may have provoked fears of Black majority rule and, in turn, heightened support for far-right parties seeking to entrench apartheid. The DID design in \citet{wilse-samson2013} offers a way to investigate this possibility.

The data from \citet{wilse-samson2013} include $153$ electoral districts with right-wing vote share measured in elections both before and after the labor shock. Of these $153$ districts, $14$ contained at least one mine (and thus were directly exposed to the labor shock), while the remaining $139$ did not. The pre-period includes the 1961, 1966, 1970, and 1974 national parliamentary elections. The post-period is collapsed into a single time point, denoted as 1977+. The outcome --- right-wing vote share --- is the proportion of voters in a district who supported parties to the right of the ruling National Party (NP), such as the Herstigte Nasionale Party (HNP) and the Conservative Party (CP).

\Cref{fig: wilse-samson I} below shows the trends in right-wing vote shares between districts with and without mines before and after the labor shock.
\begin{figure}[H]
\includegraphics[width = \linewidth]{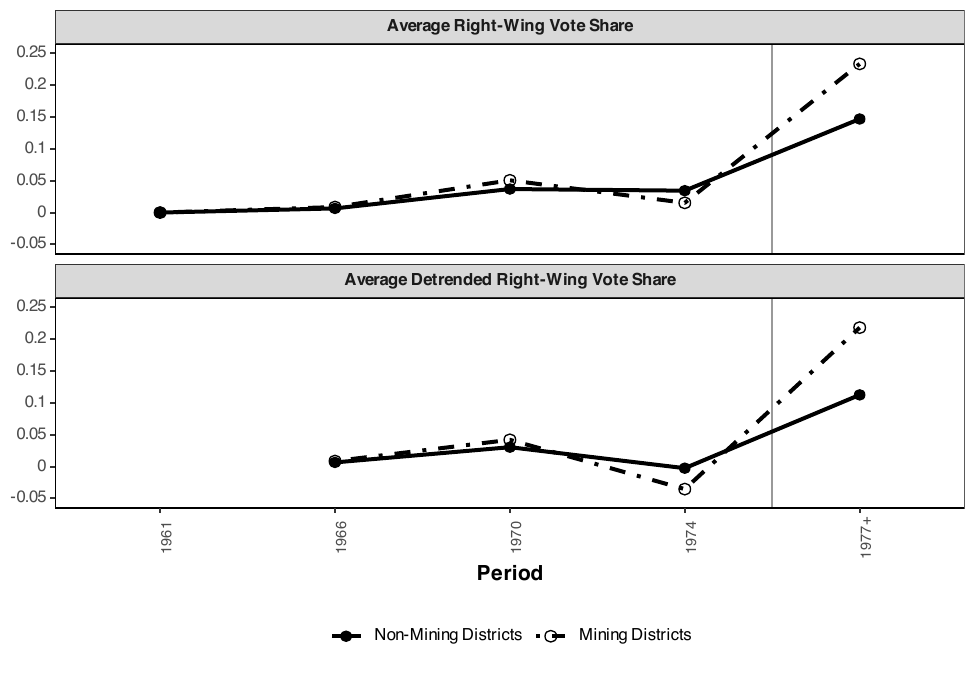}
\caption{The average right-wing vote share in electoral districts with at least one gold mine (dashed line) and those without any gold mines (solid line). The top panel plots the average right-wing vote share on its original scale, while the bottom panel plots the detrended series, defined as each group's change in average right-wing vote share relative to its mean in the preceding election period. The $x$-axis shows national election years. The election years 1961, 1966, 1970, and 1974 are shown as distinct time points, while ``1977+' represents the post-treatment election years (1977, 1981, and 1987). In the top panel, each point reflects the average right-wing vote share across all mining or non-mining districts for that election year. For ``1977+'', the value is computed by first averaging within each district across all post-treatment election years and then averaging these district-level means within each group. In the bottom panel, each point reflects the average detrended outcome, with detrending for ``1977+'' performed after first computing each district's average outcome across all post-treatment periods. The vertical grey line indicates the start of the treatment period for mining districts, beginning with the first election after 1974.}
\label{fig: wilse-samson I}
\end{figure}
\noindent As \Cref{fig: wilse-samson I} shows, trends in right-wing vote share are relatively stable across mining and non-mining districts before 1974. After 1974, support increases in both groups, but the rise is substantially larger in mining districts.

Under the point identification condition that $M = 0$ in \eqref{eq: minimax regret sens model I}, which implies parallel trends, the estimated \textrm{ATT} is $0.11$. Using a cluster bootstrap at the electoral district level, the estimated standard error (SE) is $0.02$. Based on these \textrm{ATT} and SE estimates, the resulting 95\% confidence interval (CI) is $[0.07, 0.15]$. Because inference is clustered at the individual level, each bootstrap draw resamples whole units and preserves the full pre- and post-treatment outcome paths for each unit. Detrending is therefore just a deterministic, unit-level transformation applied after resampling, which does not alter the sampling distribution produced by the clustered bootstrap.

The control group's large post-period deviation from pre-period trend suggests that the estimated \textrm{ATT} is highly dependent on the parallel trends assumption. In other words, suppose the average change from 1974 to 1977+ in the non-mining districts did \textit{not} represent what the average change would have been in the mining districts without the labor shock. If instead the pre-period change in mining districts from 1970 to 1974 was the correct imputation for the counterfactual, then the \textrm{ATT} would be dramatically different ($0.25$ compared to $0.11$).

In other words, the discordance is large. To see this directly from \Cref{fig: wilse-samson I}, note first that the parallel trends imputation is $0.112$, the average detrended post-treatment outcome for the control group (non-mining districts). When combined with the treated group's average detrended post-treatment outcome of $0.218$, the resulting \textrm{ATT} point estimate is approximately $0.11$. As can be seen directly from \Cref{fig: wilse-samson I}, the remaining possible counterfactual imputations --- ordered by group (0, then 1) and chronologically from 1966 through 1974 --- are $0.007$, $0.009$, $0.030$, $0.042$, $-0.003$, and $-0.035$. Thus, the maximum discordance between the parallel trends imputation ($0.112$) and any other feasible imputation is approximately $\abs{0.112 - (-0.035)} = 0.147$. Up to rounding, this $0.147$ is exactly the difference between the two imputed \textrm{ATT}s ($0.25$ and $0.11$) described in the paragraph above: the former obtained using the treated group's most recent pre-period average detrended outcome and the latter using the control group's post-treatment average detrended outcome.

This maximum discordance of approximately $0.15$ implies that the \textrm{ATT}'s estimated bounds for, say, $M = 1$ are roughly $[0.11 - 0.15, \, 0.11 + 0.15] = [-0.04, \, 0.26]$. Unlike the case when $M = 0$, accounting for sampling uncertainty in this setting with $M > 0$ is more challenging. Because the sample analogue of \eqref{eq: minimax regret sens model I} includes a maximum operator, the usual bootstrap --- appropriate in the $M = 0$ case above --- may be inconsistent, such as when the population-level joint CDF assigns the maximal discordance to multiple group–periods. 

I therefore draw on the ``intersection–union' approach of \citet{bergerhsu1996}, which is known to produce valid, though often conservative, confidence intervals \citep[see also][]{yeetal2024}. In practice, I use the aforementioned district-level bootstrap procedure in which, for each bootstrap replication, I resample electoral districts with replacement and recompute both the DID estimator and the discordance for every group–period pair in $\mathcal{G} \times \mathcal{V}$ on the resulting resample. This yields a bootstrap distribution of the DID estimator and, for each group–period pair, a bootstrap distribution of discordance. For each pair, I then combine the bootstrap draws of the DID estimator and that pair's discordance to obtain bootstrap distributions for the corresponding lower and upper ATT bounds under the sensitivity model, from which I take the $\alpha/2$ and $1-\alpha/2$ quantiles. Finally, following the intersection–union logic of \citet{bergerhsu1996}, I combine these candidate intervals across all group-period pairs by taking the smallest lower endpoint and the largest upper endpoint. The resulting interval is valid, though conservative, because it covers the ATT bounds regardless of which group–period pair generates the maximal discordance in the population.

\Cref{fig: wilse-samson sens} shows that the \textrm{ATT}'s estimated bounds under the discordance-based model are highly sensitive to deviations from $M = 0$. Confidence intervals are computed using a cluster bootstrap together with the aforementioned ``intersection-union'' method from \citet{bergerhsu1996}.
\begin{figure}[H]
\includegraphics[width = \linewidth]{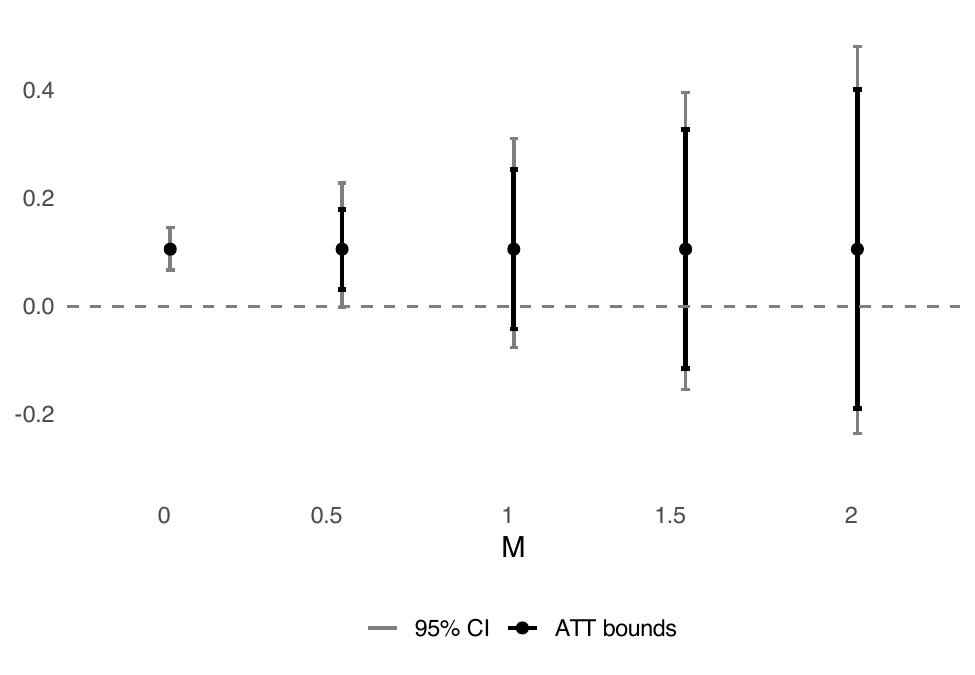}
\caption{Estimates of the ATT under increasing values of $M \geq 0$. The $x$-axis is the parameter $M$, a user-specified magnitude of the maximum discordance, which implied a bound on the parallel trends violation. For each value of $M$, the black point marks the DID point estimate of the \textrm{ATT}. The black lines show the sample-based identified set for the \textrm{ATT} under that value of $M$, while the gray lines show 95\% CIs constructed around the upper and lower bounds of the \textrm{ATT}.}
\label{fig: wilse-samson sens}
\end{figure}
\noindent This high sensitivity is expected given that, as shown in \Cref{fig: wilse-samson I}, the maximum discordance reaches $0.147$. As a result, even modest values of $M > 0$ permit violations of parallel trends large enough that the \textrm{ATT}'s bounds include zero --- meaning one can no longer rule out the possibility of no average effect on the treated. Ignoring sampling uncertainty, the changepoint value of $M$ at which the bounds first include zero is $M = 0.71$. When one incorporates sampling uncertainty, this changepoint drops to $M = 0.48$.

Note that, as \Cref{fig: wilse-samson I} also shows, even though discordance is high, violations of parallel trends in the pre-period are small, with a maximum absolute difference of $0.03$. This combination of mild pre-period violations and substantial discordance indicates that, although conclusions may be sensitive under the discordance-based model, they are robust under the pretrends-based sensitivity model. \Cref{fig: wilse-samson sens comp} below shows the dramatically different conclusions one would reach depending on the sensitivity analysis model one uses.

\begin{figure}[H]
\includegraphics[width = \linewidth]{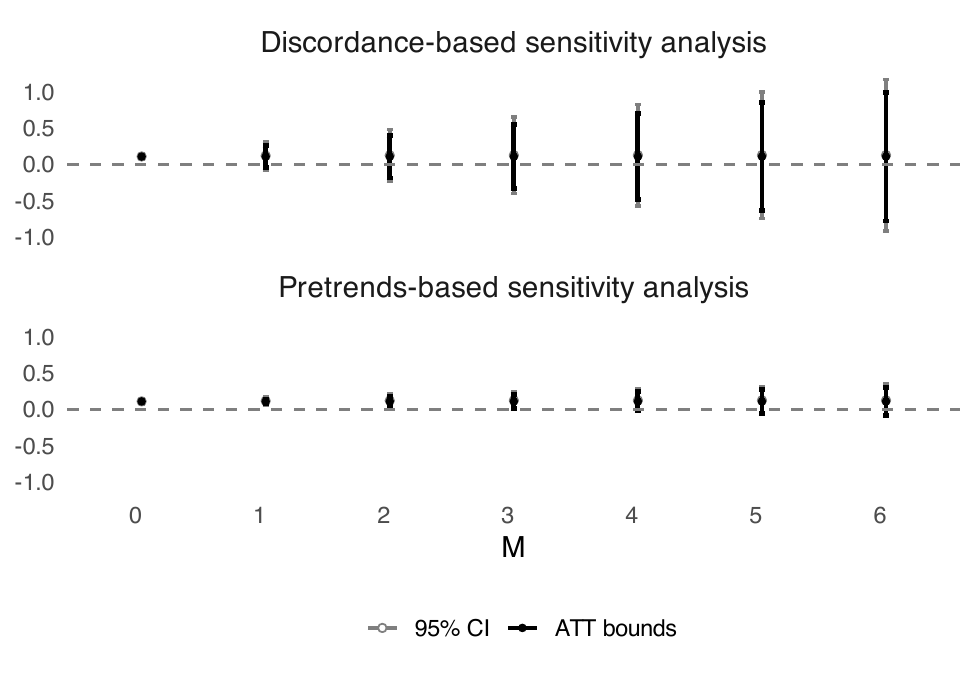}
\caption{Estimates of the \textrm{ATT} under increasing values of $M \geq 0$. Each panel corresponds to a different approach for bounding violations of the parallel trends assumption: one based on discordance across group-time comparisons in the pre-period, and the other based on differences in trends across pre-treatment periods. The $x$-axis is the parameter $M$, which controls the maximum allowed deviation from parallel trends under each approach. For each value of $M$, the black point marks the DID point estimate of the ATT. The black lines show the identified set for the \textrm{ATT} given that value of $M$, while the gray lines show 95\% CIs constructed around the upper and lower bounds of the \textrm{ATT}.}
\label{fig: wilse-samson sens comp}
\end{figure}
\noindent In both cases, the point estimate when $M = 0$ is $0.11$. However, for the pretrends-based sensitivity analysis, the estimates of the \textrm{ATT}'s lower and upper bounds bracket $0$ at $M = 3.22$. The same changepoint value incorporating sampling uncertainty is $M = 2.98$. Both of these values are much greater than $0.71$ and $0.48$ in the discordance-based sensitivity analysis.

Although the discordance-based model suggests limited robustness to violations of parallel trends, the parallel trends assumption may still be plausible in the \citet{wilse-samson2013} application. Between 1974 and 1977, South Africa experienced not only the mining labor shock, but also major nationwide events. The most notable of which was the Soweto uprisings of June 1976, the largest mass protest by Black South Africans against apartheid rule. To the extent that these uprisings and their media coverage heightened fears of majority rule among White South Africans, both mining and non-mining districts may have exhibited similar increases in right-wing vote shares. Such parallel movement across districts would support the plausibility of common shocks, suggesting that any post-treatment difference between groups arose from mining districts' exposure to the labor shock. Thus, although \Cref{fig: wilse-samson sens} illustrates how inferences about the \textrm{ATT} would change under violations of parallel trends, the assumption of parallel trends itself could still be plausible in this setting.

\section{Conclusion} \label{sec: conclusion}

It is well known that DID relies on the parallel trends assumption, under which the control group's post-treatment change is used to impute the treated group's counterfactual change. Foundational work on quasi-experiments emphasizes that the credibility of conclusions under this assumption hinges on whether they would persist under other plausible assumptions. Parallel trends provides only one imputation, and alternative trend assumptions --- each vulnerable to different forms of confounding --- can yield different implied \textrm{ATT}s. Existing pretrends-based sensitivity analyses focus exclusively on pre-treatment deviations from parallel trends. As a result, such analyses miss cases in which other plausible assumptions lead to discordant conclusions, such as when the control group's post-treatment trend departs from its pre-treatment pattern even with parallel pretrends.

This paper develops a discordance-based sensitivity analysis that formalizes the idea that credible DID inferences depend on concordance across multiple strands of evidence. The framework bounds violations of parallel trends using the worst-case discordance between the standard parallel trends imputation and other plausible imputations. I provide a formal explanation for why such concordance strengthens the credibility of DID conclusions, show that concordance across these imputations implies parallel pretrends (but not vice versa), and offer a decision-theoretic justification for the worst-case strategy. I illustrate the practical implications of the discordance- and pretrends-based approaches with an application in which the two methods produce markedly different assessments of robustness.

In conclusion, it is worth reiterating a point in one of the most prominent sensitivity analysis frameworks --- albeit tailored to a different setting:
\begin{quote}
``To say that a result is sensitive to small biases is not to assert that bias is present, nor to assert that there is evidence against a treatment effect, but simply to observe that small biases could account for the ostensible effect'' \citep[][p.~497]{rosenbaumkrieger1990}.
\end{quote}
A sensitivity analysis does not address whether an assumption is plausible. Rather, a sensitivity analysis speaks to how dependent conclusions are on an assumption (e.g., parallel trends). Judgments about plausibility come into play when one interprets the results of a sensitivity analysis. For example, a sensitivity analysis one mechanically applies to a randomized experiment might show high sensitivity. However, this high sensitivity would not undermine the conclusions from that experiment because violations of randomization are known to be false. 

Analogously, the discordance-based sensitivity model developed in this paper is designed to capture dependence on --- rather than the plausibility of --- the parallel trends assumption, indicating high dependence exactly in the settings where intuition suggests it should (as illustrated in \Cref{fig: DID cases}). In particular, the model is motivated by scenarios in which the standard parallel trends imputation yields conclusions that differ substantially from those based on pretrends-based evidence. Such discordance indicates that conclusions rely heavily on the parallel trends assumption, since alternative assumptions --- justifying different counterfactual imputations --- would yield markedly different results.

For practical purposes, because parallel pretrends are a necessary but not sufficient condition for low sensitivity to violations of parallel trends, the discordance-based model will often show low sensitivity when pretrends are parallel, producing sensitivity analyses that agree with pretrends-based approaches. However, when substantial discordance persists --- such as when the control group's post-period mean breaks from its pretrend --- the two assessments of sensitivity can diverge, sometimes dramatically. Ultimately, the choice between this model and pretrends-based alternatives requires careful judgment about which benchmark --- discordance or pretrends --- is more appropriate for conceptualizing dependence on parallel trends. That choice can carry important implications for the robustness of the conclusions one draws from DID designs.

\newpage

\appendix
\section*{Appendix A.}
\label{appx: proofs}

\subsection*{Proof of \Cref{prop: concordance}}

\begin{proof}

The expected value of $\overline{\tau}$ with respect to $\pi$ is
\begin{equation} \label{eq: ATT dist I}
\begin{split}
& p_{(0,T),(g,v)} \overline{\tau}_{(0,T),(g,v)} + p_{(0,T),\neg(g,v)} \overline{\tau}_{(0,T),\neg(g,v)} \\ 
& + p_{\neg(0,T),(g,v)} \overline{\tau}_{\neg(0,T),(g,v)} + p_{\neg(0,T),\neg(g,v)} \overline{\tau}_{\neg(0,T),\neg(g,v)},
\end{split}
\end{equation}
where
\begin{itemize}
\item $\overline{\tau}_{(0,T),(g,v)}$ denotes the upper bound of the \textrm{ATT} when both set restrictions in \eqref{eq: pointwise concord mod} and \eqref{eq: pointwise concord mod rev} hold;
\item $\overline{\tau}_{(0,T),\neg(g,v)}$ and $\overline{\tau}_{\neg(0,T),(g,v)}$ denote the upper bound of the \textrm{ATT} when only one of the two restrictions in \eqref{eq: pointwise concord mod} and \eqref{eq: pointwise concord mod rev} holds; and
\item $\overline{\tau}_{\neg(0,T),\neg(g,v)}$ denotes the upper bound of the \textrm{ATT} when neither of the restrictions in \eqref{eq: pointwise concord mod} and \eqref{eq: pointwise concord mod rev} holds.
\end{itemize}

If both restrictions in \eqref{eq: pointwise concord mod} and \eqref{eq: pointwise concord mod rev} hold, then $\bar{\epsilon}_{1,T}(0)$ must lie in the intersection of
\begin{align*}
\left[\bar{\epsilon}_{0,T}(0) - M \lambda_{g,v}, \, \bar{\epsilon}_{0,T}(0) + M \lambda_{g,v}\right]
\quad \text{and} \quad
\left[\bar{\epsilon}_{g,v}(0) - M \lambda_{g,v}, \, \bar{\epsilon}_{g,v}(0) + M \lambda_{g,v}\right].
\end{align*}
For any two intervals $[a, b]$ and $[c, d]$, their intersection is given by
\begin{align*}
[a, b] \cap [c, d] = [\max\{a, c\},\, \min\{b, d\}]
\end{align*}
whenever $\max\{a,c\} \leq \min\{b,d\}$. Applying this characterization yields
\begin{align} \label{eq: max-min interval intersec}
\left[\max\{\bar{\epsilon}_{0,T}(0) - M \lambda_{g,v}, \, \bar{\epsilon}_{g,v}(0) - M \lambda_{g,v}\}, \,
\min\{\bar{\epsilon}_{0,T}(0) + M \lambda_{g,v}, \, \bar{\epsilon}_{g,v}(0) + M \lambda_{g,v}\}\right].
\end{align}
Therefore, when both restrictions in \eqref{eq: pointwise concord mod} and \eqref{eq: pointwise concord mod rev} hold, the upper bound of the \textrm{ATT} is attained by minimizing $\bar{\epsilon}_{1,T}(0)$. Given the restriction in \eqref{eq: max-min interval intersec}, the least possible value of $\bar{\epsilon}_{1,T}(0)$ is $\max\{\bar{\epsilon}_{0,T}(0) - M \lambda_{g,v}, \, \bar{\epsilon}_{g,v}(0) - M \lambda_{g,v}\}$. Hence, the upper bound of the \textrm{ATT} is 
\begin{align*}
\overline{\tau}_{(0,T),(g,v)} & = \bar{\epsilon}_{1,T} - \max\{\bar{\epsilon}_{0,T}(0) - M \lambda_{g,v}, \, \bar{\epsilon}_{g,v}(0) - M \lambda_{g,v}\},
\end{align*}
which, because 
\begin{align*}
\overline{\tau}_{0,T} & = \bar{\epsilon}_{1,T} - \left(\bar{\epsilon}_{0,T}(0) - M \lambda_{g,v}\right) \text{ and} \\
\overline{\tau}_{g,v} & = \bar{\epsilon}_{1,T} - \left(\bar{\epsilon}_{g,v}(0) - M \lambda_{g,v}\right),
\end{align*}
simplifies to
\begin{align*}
\overline{\tau}_{(0,T),(g,v)} & = \min\{\overline{\tau}_{0,T}, \, \overline{\tau}_{g,v}\}.
\end{align*}
Consequently, the expected value of $\overline{\tau}$ in \eqref{eq: ATT dist I} is
\begin{equation} \label{eq: ATT dist II}
\begin{split}
& p_{(0,T),(g,v)} \min\{\overline{\tau}_{0,T}, \, \overline{\tau}_{g,v}\} + p_{(0,T),\neg(g,v)} \overline{\tau}_{(0,T),\neg(g,v)} \\ 
& + p_{\neg(0,T),(g,v)} \overline{\tau}_{\neg(0,T),(g,v)} + p_{\neg(0,T),\neg(g,v)} \overline{\tau}_{\neg(0,T),\neg(g,v)}.
\end{split}
\end{equation}
Then, again noting that the set restrictions in \eqref{eq: pointwise concord mod} and \eqref{eq: pointwise concord mod rev} imply
\begin{align*}
\overline{\tau}_{0,T} & = \bar{\epsilon}_{1,T} - \bar{\epsilon}_{0,T} + M \lambda_{g,v} \\
\overline{\tau}_{g,v} & = \bar{\epsilon}_{1,T} - \bar{\epsilon}_{g,v} + M \lambda_{g,v},
\end{align*}
we can plug in $\bar{\epsilon}_{1,T} - \bar{\epsilon}_{0,T} + M \lambda_{g,v}$ for $\overline{\tau}_{(0,T),\neg(g,v)}$ and $\bar{\epsilon}_{1,T} - \bar{\epsilon}_{g,v} + M \lambda_{g,v}$ for $\overline{\tau}_{\neg(0,T),(g,v)}$ into the distribution of the \textrm{ATT}'s upper bound in \eqref{eq: ATT dist II}:
\begin{equation} \label{eq: ATT dist III}
\begin{split}
& p_{(0,T),(g,v)} \min\{\overline{\tau}_{0,T}, \, \overline{\tau}_{g,v}\} + p_{(0,T),\neg(g,v)} \left(\bar{\epsilon}_{1,T} - \bar{\epsilon}_{0,T} + M \lambda_{g,v}\right) \\ 
& + p_{\neg(0,T),(g,v)} \left(\bar{\epsilon}_{1,T} - \bar{\epsilon}_{g,v} + M \lambda_{g,v}\right) + p_{\neg(0,T),\neg(g,v)} \overline{\tau}_{\neg(0,T),\neg(g,v)}.
\end{split}
\end{equation}
Consequently, the expected squared error of the \textrm{ATT}'s upper bound from the parallel trends imputation, $\E_{\pi}[(\bar{\tau} - \bar{\tau}_{0,T})^2]$, is
\begin{equation} \label{eq: UB exp squared error I}
\begin{split}
& p_{(0,T),(g,v)} \left(\min\{\overline{\tau}_{0,T}, \, \overline{\tau}_{g,v}\} - \bar{\tau}_{0,T}\right)^2 + p_{(0,T),\neg(g,v)} \left(\bar{\epsilon}_{1,T} - \bar{\epsilon}_{0,T} + M \lambda_{g,v} - \bar{\tau}_{0,T}\right)^2 \\ 
& + p_{\neg(0,T),(g,v)} \left(\bar{\epsilon}_{1,T} - \bar{\epsilon}_{g,v} + M \lambda_{g,v} - \bar{\tau}_{0,T}\right)^2 + p_{\neg(0,T),\neg(g,v)} \left(\overline{\tau}_{\neg(0,T),\neg(g,v)} - \bar{\tau}_{0,T}\right)^2.
\end{split}
\end{equation}
Then, noting that 
\begin{align*}
\bar{\epsilon}_{1,T} - \bar{\epsilon}_{0,T} + M \lambda_{g,v} - \bar{\tau}_{0,T} & = 0 \text{ and} \\
\bar{\epsilon}_{1,T} - \bar{\epsilon}_{g,v} + M \lambda_{g,v} - \bar{\tau}_{0,T} & = \bar{\epsilon}_{1,T} - \bar{\epsilon}_{g,v} + M \lambda_{g,v} - \left(\bar{\epsilon}_{1,T} - \bar{\epsilon}_{0,T} + M \lambda_{g,v}\right) \\ 
& = \bar{\epsilon}_{0,T} - \bar{\epsilon}_{g,v}
\end{align*}
yields
\begin{equation*}
\begin{split}
& p_{(0,T),(g,v)} \left(\min\{\overline{\tau}_{0,T}, \, \overline{\tau}_{g,v}\} - \overline{\tau}_{0,T}\right)^2 + p_{\neg(0,T),(g,v)} \left(\bar{\epsilon}_{0,T} - \bar{\epsilon}_{g,v}\right)^2 \\ 
& + p_{\neg(0,T),\neg(g,v)} \left(\overline{\tau}_{\neg(0,T),\neg(g,v)} - \overline{\tau}_{0,T}\right)^2,
\end{split}
\end{equation*}
which, since $\left(\bar{\epsilon}_{0,T} - \bar{\epsilon}_{g,v}\right)^2 = \lambda_{g,v}^2$ by the definition of $\lambda_{g,v} \coloneqq \abs{\bar{\epsilon}_{g,v}(0) - \bar{\epsilon}_{0,T}(0)}$ in \eqref{eq: concordance}, implies that $\E_{\pi}[(\bar{\tau} - \bar{\tau}_{0,T})^2]$ in \eqref{eq: UB exp squared error I} is
\begin{equation} \label{eq: UB exp squared error II}
\begin{split}
& p_{(0,T),(g,v)} \left(\min\{\overline{\tau}_{0,T}, \, \overline{\tau}_{g,v}\} - \overline{\tau}_{0,T}\right)^2 + p_{\neg(0,T),(g,v)} \lambda_{g,v}^2 \\ 
& + p_{\neg(0,T),\neg(g,v)} \left(\overline{\tau}_{\neg(0,T),\neg(g,v)} - \overline{\tau}_{0,T}\right)^2.
\end{split}
\end{equation}
When $\min\{\overline{\tau}_{0,T}, \, \overline{\tau}_{g,v}\} = \overline{\tau}_{0,T}$, the expected squared error in \eqref{eq: UB exp squared error II} is 
\begin{equation} \label{eq: UB exp squared error III}
\begin{split}
p_{\neg(0,T),(g,v)} \lambda_{g,v}^2 + p_{\neg(0,T),\neg(g,v)} \left(\overline{\tau}_{\neg(0,T),\neg(g,v)} - \overline{\tau}_{0,T}\right)^2.
\end{split}
\end{equation}
Alternatively, when $\min\{\overline{\tau}_{0,T}, \, \overline{\tau}_{g,v}\} = \overline{\tau}_{g,v}$, the expected squared error in \eqref{eq: UB exp squared error II} is 
\begin{equation} \label{eq: UB exp squared error IV}
\begin{split}
\left(p_{(0,T),(g,v)} + p_{\neg(0,T),(g,v)}\right) \lambda_{g,v}^2 + p_{\neg(0,T),\neg(g,v)} \left(\overline{\tau}_{\neg(0,T),\neg(g,v)} - \overline{\tau}_{0,T}\right)^2.
\end{split}
\end{equation}
Both \eqref{eq: UB exp squared error III} and \eqref{eq: UB exp squared error IV} are strictly increasing in $\lambda_{g,v}$ whenever $p_{\neg(0,T),(g,v)} > 0$.

Since $\abs{x} = \sqrt{x^2}$ for all $x \in \R$ and the square-root function is strictly increasing on $[0,\infty)$, strict increase of the expected squared error in $\lambda_{g,v}$ implies strict increase of the expected absolute error, $\E_{\pi}[\abs{\bar{\tau} - \bar{\tau}_{0,T}}]$. The reasoning is analogous for the \textrm{ATT}'s lower bound, $\underline{\tau}$, which completes the proof.
\end{proof}

\subsection*{Proof of \Cref{prop: concordance parallel trends}}

\begin{proof}
It follows from the triangle inequality that
\begin{align*}
\abs{\bar{\epsilon}_{1,v}(0) - \bar{\epsilon}_{0,v}(0)} & = \abs{(\bar{\epsilon}_{1,v}(0) - \bar{\epsilon}_{0,T}(0)) + (\bar{\epsilon}_{0,T}(0) - \bar{\epsilon}_{0,v}(0))} \\ 
& \leq \abs{\bar{\epsilon}_{1,v}(0) - \bar{\epsilon}_{0,T}(0)} + \abs{\bar{\epsilon}_{0,v}(0) - \bar{\epsilon}_{0,T}(0)}.
\end{align*}
Then, since $\abs{x} + \abs{y} \leq 2 \max\{\abs{x}, \, \abs{y}\}$ for all $x, y \in \R$, it follows that
\begin{align*}
\abs{\bar{\epsilon}_{1,v}(0) - \bar{\epsilon}_{0,T}(0)} + \abs{\bar{\epsilon}_{0,v}(0) - \bar{\epsilon}_{0,T}(0)} \leq 2 \max\left\{\abs{\bar{\epsilon}_{1,v}(0) - \bar{\epsilon}_{0,T}(0)}, \, \abs{\bar{\epsilon}_{0,v}(0) - \bar{\epsilon}_{0,T}(0)}\right\}
\end{align*}
and, from the definition of $\lambda_{g,v}$ for all $(g,v) \in \mathcal{G} \times \mathcal{V}$ in \eqref{eq: concordance}, that 
\begin{align*}
\abs{\bar{\epsilon}_{1,v}(0) - \bar{\epsilon}_{0,T}(0)} + \abs{\bar{\epsilon}_{0,v}(0) - \bar{\epsilon}_{0,T}(0)} \leq 2 \max\left\{\lambda_{1,v}, \, \lambda_{0,v}\right\},
\end{align*}
which completes the proof.
\end{proof}

\subsection*{Proof of \Cref{prop: equivalence}}

\begin{proof}

From \eqref{eq: worst-case regret I}, the worst-case regret of weights $\bm{w}$ is
\begin{align*}
R(\bm{w}) & \coloneqq \sup_{\psi\in\Psi} \left\{\psi - M \sum_{(g,v) \in \mathcal{G} \times \mathcal{V}} w_{g,v} \lambda_{g,v} \right\},
\end{align*}
which, since $M \sum_{(g,v) \in \mathcal{G} \times \mathcal{V}} w_{g,v} \lambda_{g,v}$ does not depend on $\psi$, can be expressed as 
\begin{align*}
R(\bm{w}) & = \left(\sup_{\psi\in\Psi} \psi\right) - M \sum_{(g,v) \in \mathcal{G} \times \mathcal{V}} w_{g,v} \lambda_{g,v}.
\end{align*}
By the definition of $\Psi$ in \eqref{eq: Psi}, the supremum of $\psi$ is attained at the largest feasible upper bound implied by the sensitivity model, which is
\begin{align*}
\sup_{\psi\in\Psi} \psi & = \sup_{\bm{w^{\prime}} \in \Delta_{\mathcal{G} \times \mathcal{V}}} M \sum_{(g,v) \in \mathcal{G} \times \mathcal{V}} w^{\prime}_{g,v} \lambda_{g,v},
\end{align*}
thereby implying that the worst-case regret is 
\begin{align} \label{eq: worst-case regret II}
R(\bm{w}) & = \sup_{\bm{w^{\prime}} \in \Delta_{\mathcal{G} \times \mathcal{V}}} M \sum_{(g,v) \in \mathcal{G} \times \mathcal{V}} w^{\prime}_{g,v} \lambda_{g,v} - M \sum_{(g,v) \in \mathcal{G} \times \mathcal{V}} w_{g,v} \lambda_{g,v}.
\end{align}
Then, since the mapping from the weights $\bm{w^{\prime}}$ to the value 
$M \sum_{(g,v) \in \mathcal{G} \times \mathcal{V}} w^{\prime}_{g,v} \lambda_{g,v}$
is continuous in $\bm{w^{\prime}}$ and the feasible set $\Delta_{\mathcal{G} \times \mathcal{V}}$ is a compact simplex, the extreme value theorem implies that the supremum is attained. That is, there exists some $\bm{\tilde{w}} \in \Delta_{\mathcal{G} \times \mathcal{V}}$ such that
\begin{align*}
\sup_{\bm{w^{\prime}} \in \Delta_{\mathcal{G} \times \mathcal{V}}} 
M \sum_{(g,v) \in \mathcal{G} \times \mathcal{V}} w^{\prime}_{g,v}\lambda_{g,v}
=
M \sum_{(g,v) \in \mathcal{G} \times \mathcal{V}} \tilde{w}_{g,v} \lambda_{g,v}.
\end{align*}
Because the supremum is attained at a weight vector in the simplex $\Delta_{\mathcal{G} \times \mathcal{V}}$, the first term in \eqref{eq: worst-case regret II} can equivalently be written as the following constrained maximization problem:
\begin{align}
\max_{\bm{w^{\prime}} \in \R^{\abs{\mathcal{G} \times \mathcal{V}}}} 
& \quad M \sum_{(g,v) \in \mathcal{G} \times \mathcal{V}} w^{\prime}_{g,v} \lambda_{g,v} \label{eq: opt prob}\\
\text{subject to} 
& \quad w^{\prime}_{g,v} \geq 0 \quad \text{for all } (g,v) \in \mathcal{G} \times \mathcal{V}, \notag \\
& \quad \sum_{(g,v) \in \mathcal{G} \times \mathcal{V}} w^{\prime}_{g,v} = 1. \notag
\end{align}
Since the objective function in \eqref{eq: opt prob} is linear in $\bm{w}^{\prime}$ and the feasible set is $\Delta_{\mathcal{G} \times \mathcal{V}}$, the optimal solution places all weight on the set $\mathcal{A} \coloneqq \{(g, v) \in \mathcal{G} \times \mathcal{V}: \lambda_{g,v} = \max_{(g^{\prime}, v^{\prime}) \in \mathcal{G} \times \mathcal{V}} \lambda_{g^{\prime}, v^{\prime}}\}$, i.e., on the group period pairs that attain the greatest discordance. Consequently, the worst-case regret is
\begin{align} \label{eq: worst-case regret III}
R(\bm{w}) & = M \left(\max_{(g,v) \in \mathcal{G} \times \mathcal{V}} \lambda_{g,v} - \sum_{(g,v) \in \mathcal{G} \times \mathcal{V}} w_{g,v} \lambda_{g,v}\right).
\end{align}
Because
\begin{align*}
\sum_{(g,v) \in \mathcal{G} \times \mathcal{V}} w_{g,v} \lambda_{g,v} \leq \max_{(g,v)\in \mathcal{G} \times \mathcal{V}} \lambda_{g,v} \text{ for all } \bm{w} \in \Delta_{\mathcal{G}\times\mathcal{V}},
\end{align*}
the worst-case regret in \eqref{eq: worst-case regret III} is nonnegative and attains its minimum value of $0$ whenever
\begin{align*}
\sum_{(g,v) \in \mathcal{G} \times \mathcal{V}} w_{g,v}\lambda_{g,v} = \max_{(g,v) \in \mathcal{G} \times \mathcal{V}} \lambda_{g,v},
\end{align*}
which holds for $\bm{w} = \bm{w}^{\mathrm{max}}$ in \eqref{eq: minimax weights I}, thereby completing the proof.
\end{proof}

\subsection*{\Cref{prop: tight sens bound} and Proof}

\begin{proposition} \label{prop: tight sens bound}
For any $M \geq 0$ and $\bm{w}$ in the unit $\abs{\mathcal{G} \times \mathcal{V}}$-simplex, denoted by $\Delta_{\mathcal{G} \times \mathcal{V}}$, the bound of the sensitivity model in \eqref{eq: sens model I}, i.e.,
\begin{align} \label{eq: sens model II}
\abs{\bar{\epsilon}_{1,T}(0) - \bar{\epsilon}_{0,T}(0)} \leq M \sum \limits_{(g,v) \in \mathcal{G} \times \mathcal{V}} w_{g,v} \abs{\bar{\epsilon}_{g,v}(0) - \bar{\epsilon}_{0,T}(0)},
\end{align}
is a sharp upper bound. That is, the bound in \eqref{eq: sens model II} cannot be tightened further because there exists at least one configuration of $\bar{\epsilon}_{1,T}(0)$, $\bar{\epsilon}_{0,T}(0)$, and $\bar{\epsilon}_{g,v}(0)$ for all $(g,v) \in \mathcal{G} \times \mathcal{V}$ in which the inequality in \eqref{eq: sens model II} holds with equality.
\end{proposition}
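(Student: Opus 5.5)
The plan is a direct construction: for an arbitrary $M \geq 0$ and arbitrary $\bm{w} \in \Delta_{\mathcal{G} \times \mathcal{V}}$, I will exhibit values of $\bar{\epsilon}_{1,T}(0)$, $\bar{\epsilon}_{0,T}(0)$, and $\{\bar{\epsilon}_{g,v}(0)\}_{(g,v) \in \mathcal{G} \times \mathcal{V}}$ for which \eqref{eq: sens model II} holds with equality. Nothing in the model ties these quantities to one another beyond the inequality itself, so I will treat them as free real numbers. I will also use the fact that $\bar{\epsilon}_{1,T}(0)$ appears only on the left-hand side, so it can be tuned without changing the right-hand side.

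\textbf{Step 1.} Fix any value $c \coloneqq \bar{\epsilon}_{0,T}(0) \in \R$, say $c = 0$. Choose the proxies $\bar{\epsilon}_{g,v}(0)$ arbitrarily, for example $\bar{\epsilon}_{g,v}(0) = c + 1$ for every $(g,v)$. Then set
\begin{align*}
B \coloneqq M \sum_{(g,v) \in \mathcal{G} \times \mathcal{V}} w_{g,v} \abs{\bar{\epsilon}_{g,v}(0) - c},
\end{align*}
which is a nonnegative number determined entirely by these choices. With the example choice, $B = M \sum_{(g,v)} w_{g,v} = M$, because the weights sum to one.

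\textbf{Step 2.} Set $\bar{\epsilon}_{1,T}(0) \coloneqq c + B$. The left-hand side of \eqref{eq: sens model II} is then $\abs{c + B - c} = B$, since $B \geq 0$, so the inequality holds with equality. The configuration therefore satisfies the model, which rules out any strictly smaller upper bound valid uniformly across configurations.

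Degenerate cases need no separate treatment. If $M = 0$, or if the chosen proxies coincide with $c$, then $B = 0$ and the construction gives $\bar{\epsilon}_{1,T}(0) = c$ with equality $0 = 0$. The choice $c + B$ rather than $c - B$ is arbitrary; either sign attains the bound, and this matches the symmetric intervals used in the proof of \Cref{prop: concordance}.

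The only real obstacle is interpretive rather than technical: sharpness must be read as "attained by some admissible configuration," not "attained for given observed proxies." The construction in Step 2 also delivers the stronger version. For any fixed $\bar{\epsilon}_{0,T}(0)$ and fixed proxies, taking $\bar{\epsilon}_{1,T}(0) = \bar{\epsilon}_{0,T}(0) \pm B$ attains equality. I would therefore state the argument in this stronger form, since $\bar{\epsilon}_{1,T}(0)$ is unobservable and unconstrained by the data.
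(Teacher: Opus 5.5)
Your proof is correct, and it is essentially the paper's construction: both place $\bar{\epsilon}_{1,T}(0)$ exactly on the boundary of the interval centered at $\bar{\epsilon}_{0,T}(0)$. The two constructions differ in one respect.

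The paper sets $\bar{\epsilon}_{1,T}(0) = \bar{\epsilon}_{0,T}(0) + M \sum_{(g,v)} w_{g,v}\left(\bar{\epsilon}_{0,T}(0) - \bar{\epsilon}_{g,v}(0)\right)$, using signed differences. It must therefore also impose $\bar{\epsilon}_{g,v}(0) \leq \bar{\epsilon}_{0,T}(0)$ for every $(g,v)$, so that the absolute value of the weighted sum equals the weighted sum of absolute values. If positively weighted proxies lay on both sides of $\bar{\epsilon}_{0,T}(0)$, that signed construction would fall strictly short of the bound whenever $M>0$.

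You put the absolute values directly into $B$, so you need no such restriction. You therefore get the stronger statement that equality is attainable, with either sign, for every realization of the observable quantities $\bar{\epsilon}_{0,T}(0)$ and $\{\bar{\epsilon}_{g,v}(0)\}$. The paper's one-sided configuration is the special case of yours with the $+$ sign.

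The paper's version suffices for the proposition as literally stated, which only asks for one configuration. Your version is the more relevant notion of sharpness for a sensitivity model in which only $\bar{\epsilon}_{1,T}(0)$ is unknown. It is what makes the ATT bounds $\textrm{DID} \pm M\sum_{(g,v)} w_{g,v}\lambda_{g,v}$ sharp for the data actually observed.
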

\begin{proof}
The proof proceeds by construction. First, let
\begin{align*}
\bar{\epsilon}_{1,T}(0) = \bar{\epsilon}_{0,T}(0) + M \sum \limits_{(g,v) \in \mathcal{G} \times \mathcal{V}} w_{g,v} \left(\bar{\epsilon}_{0,T}(0) - \bar{\epsilon}_{g,v}(0)\right).
\end{align*}
Then, note that
\begin{align*}
\abs{\bar{\epsilon}_{1,T}(0) - \bar{\epsilon}_{0,T}(0)} & = \abs{[\bar{\epsilon}_{0,T}(0) + M \sum \limits_{(g,v) \in \mathcal{G} \times \mathcal{V}} w_{g,v} \left(\bar{\epsilon}_{0,T}(0) - \bar{\epsilon}_{g,v}(0)\right)] - \bar{\epsilon}_{0,T}(0)} \\
& = \abs{M \sum_{(g,v) \in \mathcal{G} \times \mathcal{V}} w_{g,v} (\bar{\epsilon}_{0,T}(0) - \bar{\epsilon}_{g,v}(0))},
\end{align*}
which, since $M \geq 0$, implies that
\begin{align*}
\abs{\bar{\epsilon}_{1,T}(0) - \bar{\epsilon}_{0,T}(0)} & = M \abs{\sum_{(g,v) \in \mathcal{G} \times \mathcal{V}} w_{g,v} (\bar{\epsilon}_{0,T}(0) - \bar{\epsilon}_{g,v}(0))}.
\end{align*}
Then let $\bar{\epsilon}_{g,v}(0) \leq \bar{\epsilon}_{0,T}(0)$ for all $(g,v) \in \mathcal{G} \times \mathcal{V}$ so that each $\bar{\epsilon}_{0,T}(0) - \bar{\epsilon}_{g,v}(0)$ is nonnegative. Consequently, it follows that 
\begin{align*}
\abs{\bar{\epsilon}_{1,T}(0) - \bar{\epsilon}_{0,T}(0)} & = M \sum_{(g,v) \in \mathcal{G} \times \mathcal{V}} w_{g,v} \abs{\bar{\epsilon}_{0,T}(0) - \bar{\epsilon}_{g,v}(0)},
\end{align*}
thereby establishing that the upper bound in \eqref{eq: sens model I} is sharp.
\end{proof}

\acks{I thank Donald P.~Green, Macartan Humphreys, Naoki Egami, Jake Bowers, Fredrik S\"{a}vje, Anna Wilke, Georgiy Syunyaev, John Marshall, Sanders Korenman, and Dahlia Remler for helpful comments. I also thank audiences at PolMeth XXXVII, the Columbia University Political Science Department's Seminar in Comparative Politics, and the Marxe School Research Seminar. I am grateful to Omer Rondel for valuable research assistance and to Laurence Wilse-Samson for graciously sharing the data used in this paper's empirical application. Any remaining errors are my own.}

\vskip 0.2in
\bibliography{bibliography}

\begin{thebibliography}{36}
\providecommand{\natexlab}[1]{#1}
\providecommand{\url}[1]{\texttt{#1}}
\expandafter\ifx\csname urlstyle\endcsname\relax
  \providecommand{\doi}[1]{doi: #1}\else
  \providecommand{\doi}{doi: \begingroup \urlstyle{rm}\Url}\fi

\bibitem[Angrist and Pischke(2008)]{angristpischke2008}
Joshua~D. Angrist and J\"orn-Steffen Pischke.
\newblock \emph{Mostly Harmless Econometrics: An Empiricist's Companion}.
\newblock Princeton University Press, Princeton, NJ, 2008.

\bibitem[Berger and Hsu(1996)]{bergerhsu1996}
Roger~L. Berger and Jason~C. Hsu.
\newblock Bioequivalence trials, intersection-union tests and equivalence confidence sets.
\newblock \emph{Statistical Science}, 11\penalty0 (4):\penalty0 283--319, 1996.

\bibitem[Callaway and Sant'Anna(2021)]{callawaysantanna2021}
Brantly Callaway and Pedro H.~C. Sant'Anna.
\newblock Difference-in-differences with multiple time periods.
\newblock \emph{Journal of Econometrics}, 225\penalty0 (2):\penalty0 200--230, 2021.

\bibitem[Cochran(1965)]{cochran1965}
William~G Cochran.
\newblock The planning of observational studies of human populations.
\newblock \emph{Journal of the Royal Statistical Society. Series A (General)}, 128\penalty0 (2):\penalty0 234--266, 1965.

\bibitem[Cochran(1972)]{cochran1972}
William~G. Cochran.
\newblock Observational studies.
\newblock In Theodore~Alfonso Bancroft, editor, \emph{Statistical Papers in Honor of {G}eorge {W}. {S}nedecor}, chapter~6, pages 77--90. Iowa State University Press, Ames, IA, 1972.

\bibitem[Cohen et~al.(2020)Cohen, Olson, and Fogarty]{cohenetal2020}
Peter~L. Cohen, Matt~A. Olson, and Colin~B. Fogarty.
\newblock Multivariate one-sided testing in matched observational studies as an adversarial game.
\newblock \emph{Biometrika}, 107\penalty0 (4):\penalty0 809--825, 2020.

\bibitem[Cook and Campbell(1979)]{cookcampbell1979}
Thomas~D. Cook and Donald~T. Campbell.
\newblock \emph{Quasi-experimentation: Design \& Analysis Issues for Field Settings}.
\newblock Houghton Mifflin Company, Boston, MA, 1979.

\bibitem[Crush(1993)]{crush1993}
Jonathan Crush.
\newblock ``{T}he long-averted clash'': Farm labour competition in the south african countryside.
\newblock \emph{Canadian Journal of African Studies/La Revue canadienne des {\'e}tudes africaines}, 27\penalty0 (3):\penalty0 404--423, 1993.

\bibitem[De~Kiewiet(1941)]{dekiewiet1941}
Cornelius~W. De~Kiewiet.
\newblock \emph{A History of South Africa: Social and Economic}.
\newblock Clarendon Press, Oxford, UK, 1941.

\bibitem[Demetrescu et~al.(2025)Demetrescu, Frondel, Tomberg, and Vance]{demetrescuetal2025}
Matei Demetrescu, Manuel Frondel, Lukas Tomberg, and Colin Vance.
\newblock Fixed effects, lagged dependent variables, and bracketing: Cautionary remarks.
\newblock \emph{Political Analysis}, 33\penalty0 (4):\penalty0 378--392, 2025.

\bibitem[Ding and Li(2019)]{dingli2019}
Peng Ding and Fan Li.
\newblock A bracketing relationship between difference-in-differences and lagged-dependent-variable adjustment.
\newblock \emph{Political Analysis}, 27\penalty0 (4):\penalty0 605--615, 2019.

\bibitem[Dinkelman and Mariotti(2016)]{dinkelmanmariotti2016}
Taryn Dinkelman and Martine Mariotti.
\newblock The long-run effects of labor migration on human capital formation in communities of origin.
\newblock \emph{American Economic Journal: Applied Economics}, 8\penalty0 (4):\penalty0 1--35, 2016.

\bibitem[Dinkelman et~al.(2024)Dinkelman, Kumchulesi, and Mariotti]{dinkelmanetal2024}
Taryn Dinkelman, Grace Kumchulesi, and Martine Mariotti.
\newblock Labor migration, capital accumulation, and the structure of rural labor markets.
\newblock \emph{The Review of Economics and Statistics}, 2024.

\bibitem[Dubow(2014)]{dubow2014}
Saul Dubow.
\newblock \emph{Apartheid, 1948 -- 1994}.
\newblock Oxford Histories. Oxford University Press, Oxford, UK, 2014.

\bibitem[Egami and Yamauchi(2023)]{egamiyamauchi2023}
Naoki Egami and Soichiro Yamauchi.
\newblock Using multiple pre-treatment periods to improve {D}ifference-in-{D}ifferences and {S}taggered {A}doption designs.
\newblock \emph{Political Analysis}, 31\penalty0 (2):\penalty0 195--212, 2023.

\bibitem[Feinstein(2005)]{feinstein2005}
Charles~H. Feinstein.
\newblock \emph{An Economic History of South Africa: Conquest, Discrimination, and Development}.
\newblock Cambridge University Press, New York, NY, 2005.

\bibitem[Guryan(2001)]{guryan2001}
Jonathan Guryan.
\newblock Desegregation and black dropout rates.
\newblock Technical Report NBER Working Paper No. 8345, National Bureau of Economic Research, Cambridge, MA, June 2001.

\bibitem[Haack(1993)]{haack1993}
Susan Haack.
\newblock Double-aspect foundherentism: A new theory of empirical justification.
\newblock \emph{Philosophy and Phenomenological Research}, 53\penalty0 (1):\penalty0 113--128, 1993.

\bibitem[Haack(1995)]{haack1995}
Susan Haack.
\newblock \emph{Evidence and Inquiry: Towards Reconstruction in Epistemology}.
\newblock Wiley, Hoboken, NJ, 1995.

\bibitem[Hasegawa and Small(2017)]{hasegawasmall2017}
Raiden Hasegawa and Dylan Small.
\newblock Sensitivity analysis for matched pair analysis of binary data: {F}rom worst case to average case analysis.
\newblock \emph{Biometrics}, 73\penalty0 (4):\penalty0 1424--1432, 2017.

\bibitem[Hern\'an and Robins(2020)]{hernanrobins2020}
Miguel~\'A. Hern\'an and James~M. Robins.
\newblock \emph{Causal Inference: What If}.
\newblock Chapman \& Hall/CRC, Boca Raton, FL, 2020.

\bibitem[Kahn-Lang and Lang(2020)]{kahn-langlang2020}
Ariella Kahn-Lang and Kevin Lang.
\newblock The promise and pitfalls of differences-in-differences: Reflections on 16 and pregnant and other applications.
\newblock \emph{Journal of Business \& Economic Statistics}, 38\penalty0 (3):\penalty0 613--620, 2020.

\bibitem[Keele et~al.(2019)Keele, Small, Hsu, and Fogarty]{keeleetal2019}
Luke~J. Keele, Dylan~S. Small, Jesse~Y. Hsu, and Colin~B. Fogarty.
\newblock Patterns of effects and sensitivity analysis for differences-in-differences.
\newblock Working Paper, February 2019.

\bibitem[Leavitt and Hatfield(2025)]{leavitthatfield2025}
Thomas Leavitt and Laura~A Hatfield.
\newblock Averaged prediction models ({APM}): Identifying causal effects in controlled pre-post settings with application to gun policy.
\newblock \emph{Annals of Applied Statistics}, 19\penalty0 (3):\penalty0 1826--1846, 2025.

\bibitem[Manski and Pepper(2018)]{manskipepper2018}
Charles~F. Manski and John~V. Pepper.
\newblock How do right-to-carry laws affect crime rates? {C}oping with ambiguity using bounded-variation assumptions.
\newblock \emph{The Review of Economics and Statistics}, 100\penalty0 (2):\penalty0 232--244, 2018.

\bibitem[O'Meara(1996)]{omeara1996}
Dan O'Meara.
\newblock \emph{Forty Lost Years: The Apartheid State and the Politics of the National Party, 1948 -- 1994}.
\newblock Ohio University Press, Athens, OH, 1996.

\bibitem[Rambachan and Roth(2023)]{rambachanroth2023}
Ashesh Rambachan and Jonathan Roth.
\newblock A more credible approach to parallel trends.
\newblock \emph{Review of Economic Studies}, 90\penalty0 (5):\penalty0 2555--2591, 2023.

\bibitem[Rosenbaum(2015)]{rosenbaum2015}
Paul~R. Rosenbaum.
\newblock {C}ochran's causal crossword.
\newblock \emph{Observational Studies}, 1\penalty0 (1):\penalty0 205--211, 2015.

\bibitem[Rosenbaum(2017)]{rosenbaum2017}
Paul~R. Rosenbaum.
\newblock \emph{Observation and Experiment: An Introduction to Causal Inference}.
\newblock Harvard University Press, Cambridge, MA, 2017.

\bibitem[Rosenbaum and Krieger(1990)]{rosenbaumkrieger1990}
Paul~R. Rosenbaum and Abba~M. Krieger.
\newblock Sensitivity of two-sample permutation inferences in observational studies.
\newblock \emph{Journal of the American Statistical Association}, 85\penalty0 (410):\penalty0 493--498, 1990.

\bibitem[Roth and Sant'Anna(2023)]{rothsantanna2023}
Jonathan Roth and Pedro H.~C. Sant'Anna.
\newblock When is parallel trends sensitive to functional form?
\newblock \emph{Econometrica}, 91\penalty0 (2):\penalty0 737--747, 2023.

\bibitem[Shadish et~al.(2002)Shadish, Cook, and Campbell]{shadishetal2002}
William~R. Shadish, Thomas~D. Cook, and Donald~T. Campbell.
\newblock \emph{Experimental and Quasi-Experimental Designs for Generalized Causal Inference}.
\newblock Houghton Mifflin Company, Boston, MA, 2002.

\bibitem[Thompson(1990)]{thompson1990}
Leonard~M. Thompson.
\newblock \emph{A History of South Africa}.
\newblock Yale University Press, New Haven, CT, 1990.

\bibitem[Welsh(2009)]{welsh2009}
David Welsh.
\newblock \emph{The Rise and Fall of Apartheid}.
\newblock Reconsiderations in Southern African History. University of Virginia Press, Charlottesville, VA, 2009.

\bibitem[Wilse-Samson(2013)]{wilse-samson2013}
Laurence Wilse-Samson.
\newblock Structural change and democratization: Evidence from rural apartheid.
\newblock Unpublished working paper, \url{https://www.columbia.edu/~lhw2110/wilse_samson_apartheid.pdf}, November 2013.

\bibitem[Ye et~al.(2024)Ye, Keele, Hasegawa, and Small]{yeetal2024}
Ting Ye, Luke Keele, Raiden Hasegawa, and Dylan~S Small.
\newblock A negative correlation strategy for bracketing in difference-in-differences.
\newblock \emph{Journal of the American Statistical Association}, 119\penalty0 (547):\penalty0 2256--2268, 2024.

\end{thebibliography}

\end{document}